\newcommand{\noun}[1]{\textsc{#1}}
\theoremstyle{plain}
\newtheorem{thm}{\protect\theoremname}
  \theoremstyle{definition}
  \newtheorem{defn}[thm]{\protect\definitionname}
  \theoremstyle{remark}
  \newtheorem{rem}[thm]{\protect\remarkname}
  \theoremstyle{plain}
  \newtheorem{cor}[thm]{\protect\corollaryname}
  \theoremstyle{definition}
  \newtheorem{example}[thm]{\protect\examplename}
  \theoremstyle{plain}
  \newtheorem{lem}[thm]{\protect\lemmaname}
\newcommand{\one}{\mathds 1}  
\newcommand{\CTE}{{\sf CTE}}
\newcommand{\esssup}{\operatornamewithlimits{ess\,sup}}
  \providecommand{\corollaryname}{Corollary}
  \providecommand{\definitionname}{Definition}
  \providecommand{\examplename}{Example}
  \providecommand{\lemmaname}{Lemma}
  \providecommand{\remarkname}{Remark}
\providecommand{\theoremname}{Theorem}
\begin{document}

\title{Premiums And Reserves, Adjusted By Distortions}

\author{Alois Pichler~%
\thanks{Norwegian University of Science and Technology.\protect \\
Actuary of the Actuarial Association of Austria.\protect \\
Contact: \protect\url{alois.pichler@iot.ntnu.no}%
}}
\maketitle
\begin{abstract}
The net-premium principle is considered to be the most genuine and
fair premium principle in actuarial applications. However, an insurance
company, applying the net-premium principle, goes bankrupt with probability
one in the long run, even if the company covers its entire costs by
collecting the respective fees from its customers. It is therefore
an intrinsic necessity for the insurance industry to apply premium
principles, which guarantee at least further existence of the company
itself; otherwise, the company naturally could not insure its clients
to cover their potential, future claims. Beside this intriguing fact
the underlying loss distribution typically is not known precisely.
Hence alternative premium principles have been developed. A simple
principle, ensuring risk-adjusted credibility premiums, is the distorted
premium principle. This principle is convenient in insurance companies,
as the actuary does not have to change his or her tools to compute
the premiums or reserves. 

This paper addresses the distorted premium principle from various
angles. First, dual characterizations are developed. Next, distorted
premiums are typically computed by under-weighting or ignoring low,
but over-weighting high losses. It is demonstrated here that there
is an alternative, opposite point of view, which consists in leaving
the probability measure unchanged, but increasing the outcomes instead.
It turns out that this new point of view is natural in actuarial practice,
as it can be used for premium calculations, as well as to determine
the reserves of subsequent years in a time consistent way.

\textbf{Keywords:} Premium Principles, Dual Representation, Fenchel--Young
inequality, Stochastic dominance\\
\textbf{Classification:} 90C15, 60B05, 62P05

\end{abstract}

\section{Introduction}

Risk adjusted insurance prices by employing \emph{distorted probability
measures }have been considered in this journal by Wang \cite{Wang1998}
and for example in \cite{Balbas2012}. The idea is based on the fact
that outstanding, potential losses should be over-valued, whereas
small claims may be under-weighted in exchange. This procedure provides
a risk-adjusted premium, which always exceeds the net premium (cf.
also the recent papers \cite{Furman2008}). 

In this paper we provide a different perspective in a way, which leaves
the probabilities unchanged (the measure is not changed), but the
claims are adjusted in an appropriate way. Considering just the premium,
then both approaches provide the same result. However, the new perspective
allows computing the reserves as well in a concise and time-consistent
way, and this is the essential novel contribution. 

\medskip{}

Axiomatic characterizations of insurance premiums have been outlined
in \cite{Wang1997}, \cite{Wang2000} and in \cite{Young}. These
axiomatic treatments, initiated in an actuarial context first (early
attempts appeared already in \cite{Denneberg1989}), have been developed
further in financial mathematics, for example in the celebrated seminal
paper \cite{Artzner1999}. The connection between actuarial and financial
mathematics is striking here, as premium principles in an actuarial
context correspond to risk measures in financial mathematics, so that
risk measures constitute a premium principle and vice versa. What
perhaps surprises is that the name---risk measure---is a term that
should be expected in actuarial science rather than in financial mathematics.

The distorted  probability relates directly to a special class of
risk measures, the spectral measures introduced in \cite{Acerbi2002}
and \cite{Acerbi2002a}. An important study of spectral risk measures,
although under the different name\emph{ distortion functional}, was
provided in \cite{Pflug2006}. The concepts of (i) premium principles
by distorting probability measures, (ii) distortion functionals, and
(iii) spectral risk measures are essentially the same---they differ
just in sign conventions, resulting in a concave or a convex description.

Distorted premium principles constitute an elementary and important
class of premium principles, as every premium functional can be described
by premium functionals involving distortions. They are moreover defined
in an explicit way, hence there is an explicit evaluation scheme available,
which is of course important for an applied actuary. 

The most important distorted premium functional, which made its way
to the top, is the \emph{conditional tail expectation}, $\CTE$ (in
a financial context the alternative term Conditional Value-at-Risk
is more accepted). The conditional tail expectation is usually associated
and employed for loss distributions of entire portfolios (for example
by the US and Canadian insurance supervisory authorities, \cite{Bangwon2009},
cf. also \cite{Chi2011}). Here we shall exploit that the $\CTE$
constitutes an elementary pricing principle as well (cf.~\cite{Balbas2012}).
It is the essential advantage of the conditional tail expectation
that different representations are known, which makes this premium
principle eligible in varying situations: by conjugate duality there
is an expression in the form of a supremum, but in applications and
for quick computations a different formulation as an infimum is extremely
convenient: developed in the paper \emph{Optimization of Conditional
Value-at-Risk} \cite{RockafellarUryasev2000} (cf. also \cite{Rockafellar}),
the general formula is given in \emph{Some Remarks on the Value-at-Risk
and the Conditional Value-at-Risk} in \cite{Pflug2000}. The main
results of this article extend both formulations to distorted premium
functionals. Further, both representations can be associated with
different views on distortions, providing different interpretations
in an actuarial context.

\medskip{}

A description of the distorted premium principle as a \emph{supremum}
is a first result of this article. The description builds on dual
representations and on second order stochastic dominance. Stochastic
dominance relations have been considered in the literature, but typically
for the risk measure itself (the primal functions) with the negative
result that coherent risk measures---in general---are not consistent
with second order stochastic dominance, cf.~\cite{Giorgi2005,Krokhmal2007}.
A concise formulation, however, is available by imposing stochastic
dominance constraints on the convex conjugate function (the dual function)
instead of considering stochastic orders on the primal (cf. also \cite{Shapiro2011}),
and this is elaborated here. 

Besides that---and this is of particular importance for applications
and a further result in this paper---a formula for the distorted premium
is elaborated by involving an \emph{infimum}. The infimum description
builds on the Fenchel--Young inequality. This alternative representation
of distorted premium functionals is the converse of the initial description,
as it does not change the measure, but the outcomes instead. \medskip{}

The article is organized as follows. The premium principle is introduced
in the following Section~\ref{sec:DistortedDistribution}. Its description
as a supremum by means of stochastic order relations is contained
in Section~\ref{sec:Dual}. The infimum representation is elaborated
in Section~\ref{sec:Infimum-Representation}. Further implications
for actuarial sciences are outlined and explained in Section~\ref{sec:StochasticOptimization},
this section contains illustrating examples as well.

\section{\label{sec:DistortedDistribution}The Distorted Distribution}

In this paper---as usual in an actuarial context---we shall associate
a $\mathbb{R}-$valued random variable with loss and therefore write
$L$ to denote a random variable. $F_{L}\left(x\right):=P\left(L\le x\right)$
is the \emph{cumulative distribution function} (cdf), and 
\begin{equation}
F_{L}^{-1}\left(u\right):=\inf\left\{ x:F_{L}(x)\ge u\right\} \label{eq:1-1}
\end{equation}
 is the \emph{generalized inverse} or \emph{quantile}. The random
variable $L$ can be given by employing the probability integral transform
(or inverse sampling) as 
\begin{equation}
L=F_{L}^{-1}\left(U\right)\quad\text{a.s.},\label{eq:2-1}
\end{equation}
where $U$ is a uniformly distributed random variable%
\footnote{$U$ is \emph{uniformly distributed} if $P\left(U\le u\right)=u$
for all $u\in\left[0,1\right]$.%
} on the same probability space as $L$ and coupled in a co-monotone
way with $L$ (for example $U:=F_{L}\left(L\right)$, if $F_{L}$
is invertible, cf.~\cite{vdVaart}). 

We shall call a nonnegative, nondecreasing function 
\[
\sigma:\left[0,1\right]\to\mathbb{R}_{0}^{+}
\]
satisfying $\int_{0}^{1}\sigma\left(u\right)\mathrm{d}u=1$ \emph{distortion},
and define the antiderivative $\tau_{\sigma}\left(p\right):=\int_{0}^{p}\sigma(u)\mathrm{d}u.$
By the conditions imposed on $\sigma$ the function $\tau_{\sigma}$
is convex, nonnegative and satisfies $\tau_{\sigma}\left(1\right)=1$.
Moreover it has a generalized inverse, $\tau_{\sigma}^{-1}$, defined
in accordance with \eqref{eq:1-1}. 

The \emph{distorted loss} $L_{\sigma}$ (distorted by the distortion
$\sigma$) then is 
\begin{equation}
L_{\sigma}:=F_{L}^{-1}\left(\tau_{\sigma}^{-1}\left(U\right)\right),\label{eq:2-2}
\end{equation}
where $U$ is chosen as in \eqref{eq:2-1}. $L$ and $L_{\sigma}$
notably have the same outcomes, but their probabilities differ. It
holds that $\tau_{\sigma}\left(u\right)\le u$ by monotonicity of
$\sigma$ ($u\in\left[0,1\right]$), such that 
\[
L_{\sigma}\ge L\text{ and }F_{L_{\sigma}}\left(\cdot\right)\le F_{L}\left(\cdot\right)
\]
(it is said that $L_{\sigma}$ stochastically dominates $L$ in first
order). Applying the simple net premium principle to $L_{\sigma}$
and $L$ reveals that 
\[
\mathbb{E}\, L\le\mathbb{E}\, L_{\sigma}=\int_{0}^{1}F_{L}^{-1}\left(\tau_{\sigma}^{-1}\left(u\right)\right)\mathrm{d}u=\int_{0}^{1}F_{L}^{-1}\left(u\right)\mathrm{d}\tau_{\sigma}\left(u\right)=\int_{0}^{1}F_{L}^{-1}\left(u\right)\sigma\left(u\right)\mathrm{d}u
\]
by monotonicity of the expectation, ensuring thus that $\mathbb{E}\, L_{\sigma}$
is a plausible price for the insurance contract, the price $\mathbb{E}\, L_{\sigma}$
at least exceeds the net-premium. 

The premium $\mathbb{E}\, L_{\sigma}$ is moreover easily accessible
to the actuary, because 
\begin{equation}
F_{L_{\sigma}}\left(y\right)=P\left(L_{\sigma}\le y\right)=P\left(U\le\tau_{\sigma}\left(F_{L}\left(y\right)\right)\right)=\tau_{\sigma}\circ F_{L}\left(y\right)\quad\text{a.e.},\label{eq:3}
\end{equation}
the actuary just has to replace the cdf $F_{L}$ by $F_{L_{\sigma}}=\tau_{\sigma}\circ F_{L}$
in his/ her computations for the premium or reserves, or consider
the density 
\[
f_{L_{\sigma}}\left(y\right)=f_{L}\left(y\right)\cdot\sigma\left(F_{L}\left(y\right)\right)
\]
(if available; cf.~\cite{Valdez2011}). So the premium $\mathbb{E}L_{\sigma}$
is an expectation again---as the net premium principle---just with
probabilities modified (distorted) according \eqref{eq:3}. 

\medskip{}

These considerations give rise for the following definition.
\begin{defn}
Let $\sigma\in L^{q}$ ($q\in\left[1,\infty\right]$) be a distortion
and $L\in L^{p}$ be a random variable for the conjugate exponent
$p$ ($\frac{1}{p}+\frac{1}{q}=1$), then
\begin{equation}
\pi_{\sigma}\left(L\right):=\int_{0}^{1}F_{L}^{-1}\left(u\right)\sigma\left(u\right)\mathrm{d}u\label{eq:Rsigma}
\end{equation}
is called $\sigma$\nobreakdash-distorted premium, or simple distorted
premium for the loss $L$. $\pi_{\sigma}$ is called \emph{distorted
premium functional}.\end{defn}
\begin{rem}
The premium $\pi_{\sigma}\left(L\right)$ is well defined and finite
valued, it satisfies $\mathbb{E}\, L\le\pi_{\sigma}\left(L\right)\le\left\Vert \sigma\right\Vert _{q}\cdot\left\Vert L\right\Vert _{p}$
by Hölder's inequality. 
\end{rem}
\medskip{}

The distorted premium functional $\pi_{\sigma}$ satisfies the following
axioms, which have been proposed and formulated in a different context---for
risk measures in mathematical finance---in \cite{Artzner1997}. The
axioms here have been adapted to account for insurance instead of
financial risk (cf. also \cite{Wang1998a}, and for reinsurance cf.~\cite{Balbas2009a}). 
\begin{defn}
\label{Def-RiskMeasure}A function $\pi:L^{p}\to\mathbb{R}$ is called
\emph{premium functional} (or \emph{premium principle}) if the following
axioms are satisfied:
\begin{enumerate}[label=(M)]
\item \emph{\noun{Monotonicity: \label{enu:Monotonicity} }}$\pi\left(L_{1}\right)\le\pi\left(L_{2}\right)$
whenever $L_{1}\le L_{2}$ almost surely;
\end{enumerate}

\begin{enumerate}[label=(C)]
\item \emph{\noun{Convexity: \label{enu:Convexity} }}$\pi\left(\left(1-\lambda\right)L_{0}+\lambda L_{1}\right)\le\left(1-\lambda\right)\,\pi\left(L_{0}\right)+\lambda\,\pi\left(L_{1}\right)$
for $0\le\lambda\le1$;
\end{enumerate}

\begin{enumerate}[label=(T)]
\item \emph{\noun{Translation~equivariance:}}%
\footnote{In an economic or monetary environment this is often called \emph{\noun{Cash
invariance}} instead.%
}\emph{\noun{ \label{enu:Equivariance} }}$\pi\left(L+c\right)=\pi\left(L\right)+c$
if $c\in\mathbb{R}$;
\end{enumerate}

\begin{enumerate}[label=(H)]
\item \emph{\noun{Positive~homogeneity: \label{enu:Homogeneity} }}$\pi\left(\lambda\, L\right)=\lambda\cdot\pi\left(L\right)$
whenever $\lambda>0$.
\end{enumerate}
\end{defn}
\begin{rem}
In a banking or investment environment the interpretation of a reward
is more natural, in this context the mapping $\rho\left(L\right)=\pi\left(-L\right)$
is often considered---and called \emph{coherent risk measure}---instead
(note, that essentially the monotonicity condition \ref{enu:Monotonicity}
and translation property \ref{enu:Equivariance} reverse for $\rho$).

The term \emph{acceptability functional} was introduced in energy
or decision theory to quantify and classify acceptable strategies.
In this context the concave mapping $\mathcal{A}\left(L\right)=-\pi\left(-L\right)$,
the acceptability functional, is employed instead (here, \ref{enu:Convexity}
modifies to concavity). 
\end{rem}
\medskip{}

The conditional tail expectation is the most important premium principle. 
\begin{defn}[Conditional tail expectation]
The premium principle with distortion 
\begin{equation}
\sigma_{\alpha}\left(\cdot\right):=\frac{1}{1-\alpha}\one_{\left(\alpha,1\right]}\left(\cdot\right)\label{eq:sigma}
\end{equation}
is the \emph{conditional tail expectation} at level $\alpha$ ($0\le\alpha<1$),
\[
\CTE_{\alpha}\left(L\right):=\pi_{\sigma_{\alpha}}\left(L\right)=\frac{1}{1-\alpha}\int_{\alpha}^{1}F_{L}^{-1}\left(p\right)\mathrm{d}p.
\]
The conditional tail expectation at level $\alpha=1$ is 
\[
\CTE_{1}\left(L\right):=\lim_{\alpha\nearrow1}\CTE_{\alpha}\left(L\right)=\esssup\, L.
\]

\end{defn}
Due to the defining equation \eqref{eq:Rsigma} of the distorted premium
the same real number is assigned to all random variables $L$ sharing
the same law, irrespective of the underlying probability space. This
gives rise to the notion of version independence: 
\begin{defn}
A premium principle $\pi$ is \emph{version independent}%
\footnote{sometimes also \emph{law invariant} or \emph{distribution based}.%
}, if $\pi\left(L_{1}\right)=\pi\left(L_{2}\right)$ whenever $L_{1}$
and $L_{2}$ share the same law, that is if $P\left(L_{1}\le y\right)=P\left(L_{2}\le y\right)$
for all $y\in\mathbb{R}$.
\end{defn}
The following representation underlines the central role of the conditional
tail expectation for version independent premium principles. Moreover,
it is the basis and justification for investigating distorted premium
principles in much more detail.
\begin{thm}[Kusuoka's representation]
\label{thm:Kusuoka}Any version independent premium principle $\pi$
satisfying \ref{enu:Monotonicity}, \ref{enu:Convexity}, \ref{enu:Equivariance}
and \ref{enu:Homogeneity} on $L^{\infty}$ of an atom-less probability
space has the representation 
\begin{equation}
\pi\left(L\right)=\sup_{\mu\in\mathscr{M}}\int_{0}^{1}\CTE_{\alpha}\left(L\right)\mu\left(\mathrm{d}\alpha\right),\label{eq:Kusuoka}
\end{equation}
where $\mathscr{M}$ is a set of probability measures on $\left[0,1\right]$.\end{thm}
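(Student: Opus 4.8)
The plan is to reduce the abstract functional $\pi$ to its convex dual, symmetrize that dual using version independence, and then decompose the resulting distortions into conditional tail expectations. First I would invoke the classical dual (Fenchel--Moreau) representation of coherent functionals. Since $\pi$ is finite, convex \ref{enu:Convexity} and monotone \ref{enu:Monotonicity} on $L^{\infty}$, it is norm-continuous, and positive homogeneity \ref{enu:Homogeneity} forces its conjugate to be the indicator of a weak$^{*}$-closed convex set; hence
\[
\pi\left(L\right)=\sup_{Z\in\mathcal{Z}}\mathbb{E}\left[Z\,L\right]
\]
for a convex set $\mathcal{Z}\subseteq L^{1}$ of dual variables. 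Monotonicity \ref{enu:Monotonicity} yields $Z\ge0$ and translation equivariance \ref{enu:Equivariance} yields $\mathbb{E}\,Z=1$, so every $Z\in\mathcal{Z}$ is a probability density. (The lower-semicontinuity needed to obtain $\sigma$-additive densities rather than merely finitely additive ones is automatic for version-independent functionals on an atom-less space, and I would invoke it as a known fact.)

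Next I would bring in version independence. Because $\pi\left(L_{1}\right)=\pi\left(L_{2}\right)$ whenever $L_{1}$ and $L_{2}$ share the same law, and the space is atom-less, the set $\mathcal{Z}$ may be taken invariant under all measure-preserving transformations; equivalently, with every $Z$ it contains every $\tilde{Z}$ of the same law. The Hardy--Littlewood rearrangement inequality then gives, for fixed $L$,
\[
\sup_{\tilde{Z}\sim Z}\mathbb{E}\left[\tilde{Z}\,L\right]=\int_{0}^{1}F_{Z}^{-1}\left(u\right)\,F_{L}^{-1}\left(u\right)\mathrm{d}u,
\]
the supremum being attained on an atom-less space by a co-monotone coupling. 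Writing $\sigma:=F_{Z}^{-1}$, which is nonnegative, nondecreasing and integrates to $\mathbb{E}\,Z=1$ and is therefore a distortion in the sense of the paper, this collapses the representation to a supremum of simple distorted premia,
\[
\pi\left(L\right)=\sup_{\sigma\in\mathcal{S}}\pi_{\sigma}\left(L\right),\qquad\mathcal{S}:=\left\{ F_{Z}^{-1}:Z\in\mathcal{Z}\right\}.
\]

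It then remains to express a single distorted premium as a mixture of conditional tail expectations. Here I would use that the elementary distortions $\sigma_{\alpha}=\frac{1}{1-\alpha}\one_{\left(\alpha,1\right]}$ generate the convex cone of distortions: for a distortion $\sigma$ set $\mu\left(\mathrm{d}\alpha\right):=\sigma\left(0\right)\delta_{0}\left(\mathrm{d}\alpha\right)+\left(1-\alpha\right)\mathrm{d}\sigma\left(\alpha\right)$ on $\left[0,1\right]$. An integration by parts shows that $\mu$ has total mass $\sigma\left(0\right)+\int_{\left(0,1\right]}\left(1-\alpha\right)\mathrm{d}\sigma\left(\alpha\right)=1$ and that $\sigma\left(u\right)=\int_{\left[0,1\right]}\sigma_{\alpha}\left(u\right)\mu\left(\mathrm{d}\alpha\right)$; integrating against $F_{L}^{-1}$ and recalling the definition of $\CTE_{\alpha}$ yields $\pi_{\sigma}\left(L\right)=\int_{0}^{1}\CTE_{\alpha}\left(L\right)\mu\left(\mathrm{d}\alpha\right)$. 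Setting $\mathscr{M}:=\left\{ \mu:\sigma\in\mathcal{S}\right\}$ then delivers \eqref{eq:Kusuoka}.

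The main obstacle I anticipate is the symmetrization step: justifying rigorously that version independence permits replacing $\mathcal{Z}$ by its rearrangement-invariant hull and that the pointwise supremum over each orbit equals the rearrangement integral. This is precisely where the atom-less hypothesis is indispensable, since on a space with atoms not every rearrangement is realized by a measure-preserving map and the Hardy--Littlewood equality may fail, so the genuine care lies in the measure-theoretic handling of the couplings rather than in the two essentially computational steps that bracket it.
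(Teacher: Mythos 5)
Your proposal is correct, but it cannot be compared step-by-step with the paper's own proof for a simple reason: the paper does not prove the theorem at all---its ``proof'' is a pointer to the literature (Kusuoka, Pflug--R\"omisch, Shapiro, \emph{in connection with} the Jouini--Schachermayer--Touzi result). What you have written is essentially a faithful reconstruction of the standard argument behind those citations, and each ``known fact'' you invoke corresponds to one of them: the passage from finitely additive to $\sigma$-additive dual elements for version-independent functionals on an atom-less space is exactly the Fatou-property theorem of the cited Schachermayer reference; the symmetrization of the dual set $\mathcal{Z}$ by law invariance together with the Hardy--Littlewood inequality is the core of Kusuoka's and Shapiro's arguments; and your final step---decomposing a distortion $\sigma$ into conditional tail expectations through the mixing measure $\mu_{\sigma}\left(\mathrm{d}\alpha\right)=\sigma\left(0\right)\delta_{0}\left(\mathrm{d}\alpha\right)+\left(1-\alpha\right)\mathrm{d}\sigma\left(\alpha\right)$---is carried out almost verbatim by the paper itself immediately \emph{after} the theorem, in \eqref{eq:1} and the integration-by-parts computation following it, where it links \eqref{eq:Kusuoka} to the distorted-premium form \eqref{eq:12-1}. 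Your closing caveat is also well placed: the law invariance of the conjugate $\pi^{*}$ and the identity $\sup_{\tilde{Z}\sim Z}\mathbb{E}\,\tilde{Z}L=\int_{0}^{1}F_{Z}^{-1}\left(u\right)F_{L}^{-1}\left(u\right)\mathrm{d}u$ are precisely where atomlessness enters, and a fully rigorous treatment must either reproduce the approximation argument by measure-preserving rearrangements or defer to the cited works. In short, your route is the standard one underlying the paper's citations; the paper's one-line proof buys economy, while your sketch supplies an actual argument that is self-contained modulo two citable facts (norm-continuity plus Fenchel--Moreau duality, and the Fatou property for law-invariant functionals).
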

\begin{proof}
Cf.~\cite{Kusuoka,PflugRomisch2007,Shapiro2011} in connection with
\cite{Schachermayer2006}.
\end{proof}
In the present context of distorted premiums it is essential to observe
that any distorted premium has an immediate representation as in \eqref{eq:Kusuoka},
the measure $\mu_{\sigma}$ corresponding to the density $\sigma$
is 
\begin{equation}
\mu_{\sigma}\left(A\right):=\sigma\left(0\right)\delta_{0}\left(A\right)+\int_{A}1-\alpha\:\mathrm{d}\sigma\left(\alpha\right)\qquad(A\subset[0,1],\text{ measurable})\label{eq:1}
\end{equation}
with cumulative distribution function (which we may denote again by
$\mu_{\sigma}$, because it is a measure on $\left[0,1\right]$) 
\[
\mu_{\sigma}\left(p\right)=\left(1-p\right)\sigma\left(p\right)+\int_{0}^{p}\sigma\left(\alpha\right)\mathrm{d}\alpha\quad(0\le p\le1)\qquad(\text{and }\mu_{\sigma}\left(p\right)=0\text{ if }p<0).
\]
$\mu_{\sigma}$ is a positive measure since $\sigma$ is nondecreasing,
and integration by parts reveals that it is a probability measure.
Kusuoka's representation is immediate by Riemann--Stieltjes integration
by parts for the set $\mathscr{M}=\left\{ \mu_{\sigma}\right\} $,
as 
\begin{align*}
\int_{0}^{1}\CTE_{\alpha}\left(L\right)\mu_{\sigma}\left(\mathrm{d}\alpha\right) & =\sigma\left(0\right)\CTE_{0}\left(L\right)+\int_{0}^{1}\frac{1}{1-\alpha}\int_{\alpha}^{1}F_{L}^{-1}\left(p\right)\mathrm{d}p\,\left(1-\alpha\right)\mathrm{d}\sigma\left(\alpha\right)\\
 & =\int_{0}^{1}F_{L}^{-1}\left(p\right)\sigma\left(p\right)\mathrm{d}p=\pi_{\sigma}\left(L\right).
\end{align*}

Conversely, the premium functional $\int_{0}^{1}\CTE_{\alpha}\left(L\right)\mu\left(\mathrm{d}\alpha\right)$
in Kusuoka's representation \eqref{thm:Kusuoka} often can be expressed
as a distorted premium functional with distortion $\sigma_{\mu}$,
this is accomplished by the function 
\begin{equation}
\sigma_{\mu}(\alpha):=\int_{0}^{\alpha}\frac{1}{1-p}\mu\left(\mathrm{d}p\right).\label{eq:3-1}
\end{equation}
Provided that $\sigma_{\mu}$ is well defined (notice that possibly
$\mu\left(\left\{ 1\right\} \right)>0$ has to be excluded when computing
$\sigma_{\mu}\left(1\right)$) it is positive and a density, as $\int_{0}^{1}\sigma_{\mu}(\alpha)\mathrm{d}\alpha=\int_{0}^{1}\frac{1}{1-p}\int_{p}^{1}\mathrm{d}\alpha\mu\left(\mathrm{d}p\right)=1$.

\paragraph*{Kusuoka representation by means of distorted premium principles. }

By the preceding discussion there is a one-to-one relationship $\sigma\mapsto\mu_{\sigma}$
given by \eqref{eq:1} (with inverse $\mu\mapsto\sigma_{\mu}$ given
by \eqref{eq:3-1}) such that Kusuoka's representation (Theorem~\ref{thm:Kusuoka})
can be formulated with distorted premium functionals equally well,
\begin{equation}
\pi\left(L\right)=\sup_{\sigma\in\mathcal{S}}\,\pi_{\sigma}\left(L\right).\label{eq:12-1}
\end{equation}
$\mathcal{S}$ is a set of distortions. $\mathcal{S}$ can be restricted
to consist of continuous and strictly increasing (thus invertible)
density functions. A rigorous discussion is rather straight forward,
although beyond the scope of this article. Here, it is just important
to observe that any premium principle is built of distorted premium
functionals by \eqref{eq:12-1}.

\section{\label{sec:Dual}Supremum-Representation of Distorted Premium Functionals}

The supremum representation of distorted premium functionals is derived
from the convex conjugate relation for convex functionals. To formulate
the result in a concise way we employ the notion of (second order)
stochastic dominance. 
\begin{defn}[Convex ordering]
Let $\tau,\,\sigma:\left[0,1\right]\rightarrow\mathbb{R}$ be integrable
functions.
\begin{enumerate}[label=(\roman{enumi})]
\item $\sigma$ majorizes $\tau$ (denoted $\sigma\succcurlyeq\tau$ or
$\tau\preccurlyeq\sigma$ ) iff 
\[
\int_{\alpha}^{1}\tau\left(p\right)\mathrm{d}p\le\int_{\alpha}^{1}\sigma\left(p\right)\mathrm{d}p\quad\left(\alpha\in\left[0,1\right]\right)\mbox{ and }\int_{0}^{1}\tau\left(p\right)\mathrm{d}p=\int_{0}^{1}\sigma\left(p\right)\mathrm{d}p.
\]

\item The spectrum $\sigma$ majorizes the random variable $Z$ ($Z\preccurlyeq\sigma$)
iff 
\begin{align*}
(1-\alpha)\CTE_{\alpha}\left(Z\right)\le\int_{\alpha}^{1}\sigma\left(p\right)\mathrm{d}p & \quad\text{for all }\alpha\in\left[0,1\right]\\
\mbox{ and }\mathbb{E}Z=\int_{0}^{1}\sigma\left(p\right)\mathrm{d}p.
\end{align*}

\end{enumerate}
\end{defn}
\begin{rem}
Recall that for the conditional tail expectation it holds that 
\[
\left(1-\alpha\right)\CTE_{\alpha}\left(Z\right)=\int_{\alpha}^{1}F_{Z}^{-1}\left(p\right)\mathrm{d}p=\int_{\alpha}^{1}\tau\left(p\right)\mathrm{d}p,
\]
where $\tau$ is the function $\tau\left(\cdot\right):=F_{Z}^{-1}\left(\cdot\right)$.
It should thus be noted that 
\[
Z\preccurlyeq\sigma\mbox{ if and only if }F_{Z}^{-1}\preccurlyeq\sigma.
\]
Moreover $Z\preccurlyeq\sigma$ is related to a \emph{convex order
}or \emph{stochastic dominance} conditions, which are studied for
example in \cite{StoyanMueller2002} or \cite{shanked}. The dominance
in convex (concave) order was used in studying risk measures for example
in \cite{Follmer2004,Dana2005}. 
\end{rem}
The following Theorem~\ref{thm1} is a characterization of distorted
premium functionals by employing the convex conjugate relationship
for the dual. 

\bigskip{}

\begin{thm}[Representation of distorted premium functionals as a supremum by stochastic
order constraints.]
\label{thm1}Let $\pi_{\sigma}\left(L\right)$ be a distorted premium
functional.  Then the representation 
\begin{align}
\pi_{\sigma}\left(L\right) & =\sup\left\{ \mathbb{E}\, LZ\colon\: Z\preccurlyeq\sigma\right\} \label{eq:Dual}\\
 & =\sup\left\{ \mathbb{E}\, LZ\colon\:\mathbb{E}Z=1,\,\left(1-\alpha\right)\CTE_{\alpha}\left(Z\right)\le\int_{\alpha}^{1}\sigma\left(p\right)\mathrm{d}p,\,0\le\alpha<1\right\} \nonumber 
\end{align}
holds true. \end{thm}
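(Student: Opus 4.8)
The two displayed lines coincide by the very definition of $Z\preccurlyeq\sigma$ together with the identity $(1-\alpha)\CTE_\alpha(Z)=\int_\alpha^1 F_Z^{-1}(p)\,\mathrm{d}p$ recorded in the preceding remark (and the normalization $\int_0^1\sigma=1$), so it suffices to establish the first equality. The plan is to prove the two inequalities separately: that $\pi_\sigma(L)$ is attained by an admissible $Z$ (giving ``$\ge$''), and that every admissible $Z$ obeys $\mathbb{E}\,LZ\le\pi_\sigma(L)$ (giving ``$\le$''). This simultaneously shows the supremum is in fact a maximum.

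For attainability I would take the co-monotone candidate $Z^{\ast}:=\sigma(U)$, where $U=F_L(L)$ is the uniform variable coupled with $L$ as in \eqref{eq:2-1}. Since $\sigma$ is nondecreasing, the quantile of $Z^{\ast}$ is $F_{Z^{\ast}}^{-1}=\sigma$ itself, whence $\mathbb{E}Z^{\ast}=\int_0^1\sigma=1$ and $(1-\alpha)\CTE_\alpha(Z^{\ast})=\int_\alpha^1\sigma(p)\,\mathrm{d}p$, with equality for every $\alpha$; thus $Z^{\ast}\preccurlyeq\sigma$ is admissible. As $L=F_L^{-1}(U)$ and $Z^{\ast}=\sigma(U)$ are co-monotone, $\mathbb{E}\,LZ^{\ast}=\int_0^1 F_L^{-1}(u)\sigma(u)\,\mathrm{d}u=\pi_\sigma(L)$, so the supremum is at least $\pi_\sigma(L)$.

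For the upper bound, fix any admissible $Z$ and write $\tau:=F_Z^{-1}$. First I would apply the Hardy--Littlewood--Pólya rearrangement (upper Fréchet) inequality $\mathbb{E}\,LZ\le\int_0^1 F_L^{-1}(u)\,\tau(u)\,\mathrm{d}u$, which bounds the given coupling of $L$ and $Z$ by their co-monotone coupling. It then remains to show $\int_0^1 F_L^{-1}(u)\,\bigl(\sigma(u)-\tau(u)\bigr)\,\mathrm{d}u\ge0$. Setting $G(\alpha):=\int_\alpha^1\bigl(\sigma-\tau\bigr)(p)\,\mathrm{d}p$, the constraint $Z\preccurlyeq\sigma$ says exactly $G\ge0$ on $[0,1]$, while $\mathbb{E}Z=1$ forces $G(0)=G(1)=0$. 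Integration by parts turns the target integral into $\int_0^1 G(u)\,\mathrm{d}F_L^{-1}(u)$, the boundary terms vanishing; since $F_L^{-1}$ is nondecreasing its Stieltjes measure is nonnegative and $G\ge0$, so the integral is nonnegative. Combining the two steps yields $\mathbb{E}\,LZ\le\pi_\sigma(L)$ for every admissible $Z$, closing the equality.

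I expect the main obstacle to be the rigorous justification of the rearrangement inequality and of the integration-by-parts step under the H\"older pairing $L\in L^p$, $\sigma,\tau\in L^q$: one must check that $G$ is well defined and absolutely continuous, that $F_L^{-1}$ admits a Stieltjes integrator even when unbounded, and that no boundary mass is lost at $u=1$ (the same endpoint subtlety that forced excluding $\mu(\{1\})>0$ near \eqref{eq:3-1}). A conjugate-duality derivation---identifying $\{Z\colon Z\preccurlyeq\sigma\}$ as the subdifferential of the convex functional $\pi_\sigma$---would produce the same dual set, but it still reduces to these two monotone-rearrangement facts, so I would favor the direct argument above.
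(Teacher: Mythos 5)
Your proposal is correct, but it is organized quite differently from the paper's proof. The paper works through conjugate duality: it invokes the Legendre--Fenchel biconjugation $\pi_{\sigma}(L)=\sup_{Z}\mathbb{E}\,LZ-\pi_{\sigma}^{*}(Z)$ and then computes $\pi_{\sigma}^{*}(Z)=\sup_{L}\mathbb{E}\,LZ-\pi_{\sigma}(L)$ explicitly, showing it equals $0$ when $\mathbb{E}Z=1$ and $G\le S$ (where $G(\alpha)=\int_{\alpha}^{1}F_{Z}^{-1}$, $S(\alpha)=\int_{\alpha}^{1}\sigma$) and $+\infty$ otherwise; necessity of the constraints is extracted by testing with constant losses $L\equiv c$ and with $L_{c}=c\,\one_{B^{\complement}}$. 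You instead prove the two inequalities directly: attainment via the feasible co-monotone candidate $Z^{\ast}=\sigma(U)$ (which, incidentally, is essentially the content and proof of the paper's subsequent Corollary on $\sup\{\mathbb{E}\,L\sigma(U)\}$), and the upper bound via Hardy--Littlewood rearrangement followed by Stieltjes integration by parts against the nondecreasing $F_{L}^{-1}$, using $G(0)=G(1)=0$ and $G\ge0$. Both arguments ultimately rest on the same two analytic facts---the rearrangement inequality and the integration-by-parts step---but your route is more elementary and self-contained: it needs no Fenchel--Moreau theorem (hence no lower-semicontinuity considerations) and it exhibits the maximizer explicitly, so the supremum is shown to be attained. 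What the paper's route buys in exchange is the explicit identification of the convex conjugate $\pi_{\sigma}^{*}$ as the indicator of the dominance set $\{Z\colon Z\preccurlyeq\sigma\}$, which is of independent interest and is the form the paper actually uses to phrase the result as a ``stochastic order constraint on the dual.'' One caveat, shared equally by the paper: your attainment step presupposes a uniform $U$ coupled co-monotonically with $L$, i.e.\ a sufficiently rich (atomless or augmented) probability space---the paper only acknowledges this in the remark following its Corollary, and on a finite atomic space the claimed maximum can fail. Your closing concern about boundary terms for unbounded $L$ is legitimate but matches the paper's own restriction to $L\in L^{\infty}$ in its conjugate computation, so it is not a gap relative to the paper.
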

\begin{rem}
The stochastic order constraint is employed here for the \emph{dual}
variable $Z$. Note also that the set $\left\{ Z\colon\, Z\preccurlyeq\sigma\right\} $
is closed, as 
\[
\left\{ Z\colon\, Z\preccurlyeq\sigma\right\} =\bigcap_{\begin{array}{c}
P\left(A\right)\le\alpha\\
0\le\alpha\le1
\end{array}}\left\{ Z\colon\,\mathbb{E}Z=1,\,\mathbb{E}\,\one_{A}Z\le\int_{\alpha}^{1}\sigma\left(p\right)\mathrm{d}p\right\} .
\]

\end{rem}

\begin{rem}
For the distortion $\sigma_{\mu}$ associated with $\mu$ (cf. \eqref{eq:3-1})
it holds that $\int_{\alpha}^{1}\sigma_{\mu}\left(p\right)\mathrm{d}p$
$=\int_{0}^{1}\min\left\{ \frac{1-\alpha}{1-p},1\right\} \mathrm{\mu\left(\mathrm{d}p\right)}$,
hence \eqref{eq:Dual} can be stated equivalently as 
\begin{align*}
\pi_{\sigma_{\mu}}\left(L\right) & =\sup\left\{ \mathbb{E}\, LZ\left|\begin{array}{l}
\mathbb{E}Z=1,\text{ and for all }\alpha\in\left(0,1\right)\\
\CTE_{\alpha}\left(Z\right)\le\int_{0}^{1}\min\left\{ \frac{1}{1-\alpha},\frac{1}{1-p}\right\} \mu\left(\mathrm{d}p\right)
\end{array}\right.\right\} 
\end{align*}
just by involving the measure $\mu$ from Kusuoka's representation.
\end{rem}

\begin{rem}
We emphasize that the conditions $(1-\alpha)\CTE_{\alpha}\left(Z\right)\le\int_{\alpha}^{1}\sigma\left(p\right)\mathrm{d}p$
and $\mathbb{E}Z=1$ together imply that $Z\ge0$ almost everywhere.
Indeed, suppose that $P\left(Z<0\right)=:p>0$. Then $1=\mathbb{E}Z=\int_{\left\{ Z<0\right\} }Z\mathrm{d}P+\int_{\left\{ Z\ge0\right\} }Z\mathrm{d}P=\int_{\left\{ Z<0\right\} }Z\mathrm{d}P+\left(1-p\right)\CTE_{p}(Z)$.
As $\int_{\left\{ Z<0\right\} }Z\mathrm{d}P<0$ it follows that $\left(1-p\right)$
$\CTE_{p}(Z)>1$. But this contradicts the fact that $(1-p)\CTE_{p}\left(Z\right)\le\int_{p}^{1}\sigma\left(p^{\prime}\right)\mathrm{d}p^{\prime}\le1$,
hence $Z$ is nonnegative, $Z\ge0$ almost surely.\end{rem}
\begin{proof}[Proof of Theorem~\ref{thm1}]
Recall the Legendre--Fenchel transformation for convex functions
(cf.~\cite{RuszczynskiShapiro2009}), 
\begin{eqnarray}
\pi_{\sigma}\left(L\right) & = & \sup_{Z}\mathbb{E}\, LZ-\pi_{\sigma}^{*}\left(Z\right),\text{ where}\nonumber \\
\pi_{\sigma}^{*}\left(Z\right) & = & \sup_{L}\mathbb{E}\, LZ-\pi_{\sigma}\left(L\right).\label{eq:Dual-2}
\end{eqnarray}
As $\pi_{\sigma}$ is version independent the random variable $L$
minimizing \eqref{eq:Dual-2} is coupled in a  co-monotone way with
$Z$ (cf.~\cite{Hoeffding} and \cite[Proposition 1.8]{PflugRomisch2007}
for the respective rearrangement inequality, sometimes referred to
as \emph{Hardy and Littlewood's inequality} or \emph{Hardy--Littlewood--Pólya
inequality}---cf.~\cite{Dana2005}). It follows that 
\begin{align*}
\pi_{\sigma}^{*}\left(Z\right) & =\sup_{L}\mathbb{E}LZ-\pi_{\sigma}\left(L\right)\\
 & =\sup\int_{0}^{1}F_{L}^{-1}\left(\alpha\right)F_{Z}^{-1}\left(\alpha\right)\mathrm{d}\alpha-\int_{0}^{1}F_{L}^{-1}\left(\alpha\right)\sigma\left(\alpha\right)\mathrm{d}\alpha,
\end{align*}
the infimum being among all cumulative distribution functions $F_{L}\left(y\right)=P\left(L\le y\right)$
of $L$. Define $G\left(\alpha\right):=\int_{\alpha}^{1}F_{Z}^{-1}\left(p\right)\mathrm{d}p$
and $S\left(\alpha\right):=\int_{\alpha}^{1}\sigma\left(p\right)\mathrm{d}p$,
whence 
\begin{align}
\pi_{\sigma}^{*}\left(Z\right) & =\sup_{F_{L}}\int_{0}^{1}F_{L}^{-1}\left(\alpha\right)\mathrm{d}\left(S\left(\alpha\right)-G\left(\alpha\right)\right)\nonumber \\
 & =\sup_{F_{L}}\left[F_{L}^{-1}\left(\alpha\right)\left(S\left(\alpha\right)-G\left(\alpha\right)\right)\right]_{\alpha=0}^{1}-\int_{0}^{1}S\left(\alpha\right)-G\left(\alpha\right)\mathrm{d}F_{L}^{-1}\left(\alpha\right)\nonumber \\
 & =\sup_{F_{L}}F_{L}^{-1}\left(0\right)\left(G\left(0\right)-S\left(0\right)\right)+\int_{0}^{1}G\left(\alpha\right)-S\left(\alpha\right)\mathrm{d}F_{L}^{-1}\left(\alpha\right)\label{eq:4}
\end{align}
by integration by parts of the Riemann--Stieltjes integral and as
it is enough to consider $L\in L^{\infty}$. 

Consider the constant random variables $L\equiv c$ ($c\in\mathbb{R}$),
then $F_{L}^{-1}\equiv c$ and, by \eqref{eq:4}, 
\[
\pi_{\sigma}^{*}\left(Z\right)\ge\sup_{c\in\mathbb{R}}\: c\left(G\left(0\right)-S\left(0\right)\right).
\]
Note now that $S\left(0\right)=\int_{0}^{1}\sigma\left(p\right)\mathrm{d}p=1$,
whence 
\[
\pi_{\sigma}^{*}\left(Z\right)\ge\sup_{c\in\mathbb{R}}\: c\left(G\left(0\right)-1\right)=\begin{cases}
0 & \mbox{if }G\left(0\right)=1\\
\infty & \mbox{else}
\end{cases}\quad=\begin{cases}
0 & \mbox{if }\mathbb{E}Z=1\\
\infty & \mbox{else},
\end{cases}
\]
because 
\begin{equation}
G\left(0\right)=\int_{0}^{1}F_{Z}^{-1}\left(p\right)\mathrm{d}p=\mathbb{E}\, Z.\label{eq:11}
\end{equation}
Assuming  $\mathbb{E}\, Z=1$ it follows from \eqref{eq:4} that 
\[
\pi_{\sigma}^{*}\left(Z\right)=\sup_{F_{L}}\int_{0}^{1}G\left(\alpha\right)-S\left(\alpha\right)\mathrm{d}F_{L}^{-1}\left(\alpha\right).
\]

Then choose an arbitrary measurable set $B$ and consider the random
variable $L_{c}:=c\cdot\one_{B^{\complement}}$ for some $c>0$. Note
that $F_{L_{c}}^{-1}=\one_{\left[\alpha_{0},1\right]}$, where $\alpha_{0}=P\left(B\right)$.
With this choice 
\begin{eqnarray*}
\pi_{\sigma}^{*}\left(Z\right) & \ge & \sup_{F_{L_{c}}}\int_{0}^{1}G\left(\alpha\right)-S\left(\alpha\right)\mathrm{d}F_{L_{c}}^{-1}\left(\alpha\right)\ge\sup_{c\ge0}c\left(G\left(\alpha_{0}\right)-S\left(\alpha_{0}\right)\right)=\\
 & = & \begin{cases}
0 & \mbox{if }G\left(\alpha{}_{0}\right)\le S\left(\alpha_{0}\right)\\
\infty & \mbox{else}
\end{cases}
\end{eqnarray*}
As $B$ was chosen arbitrarily it follows that $G\left(\alpha\right)\le S\left(\alpha\right)$
has to hold for any $0\le\alpha\le1$ for $Z$ to be feasible.

Conversely, if \eqref{eq:11} and $G\left(\alpha\right)\le S\left(\alpha\right)$
for all $0\le\alpha\le1$, then 
\[
\sup_{F_{L}}\int_{0}^{1}G\left(\alpha\right)-S\left(\alpha\right)\mathrm{d}F_{L}^{-1}\left(\alpha\right)\le0,
\]
because $F_{L}^{-1}\left(\cdot\right)$ is a nondecreasing function.
Note now that 
\begin{align*}
\int_{\alpha}^{1}\sigma\left(p\right)\mathrm{d}p & =S\left(\alpha\right)\ge G\left(\alpha\right)\\
 & =\int_{\alpha}^{1}F_{Z}^{-1}\left(p\right)\mathrm{d}p=\left(1-\alpha\right)\CTE_{\alpha}\left(Z\right),
\end{align*}
from which finally follows that 
\[
\pi_{\sigma}^{*}\left(Z\right)=\begin{cases}
0 & \mbox{if }\mathbb{E}Z=1\mbox{ and }\left(1-\alpha\right)\CTE_{\alpha}\left(Z\right)\le\int_{\alpha}^{1}\sigma\left(p\right)\mathrm{d}p\ \left(0\le\alpha\le1\right)\\
\infty & \mbox{else},
\end{cases}
\]
which is the assertion.
\end{proof}
The following statement derives naturally as a corollary of Theorem~\ref{thm1},
it will be essential in the sequel. 
\begin{cor}
Let $\pi_{\sigma}$ be a distortion risk functional, then
\begin{equation}
\pi_{\sigma}\left(L\right)=\sup\left\{ \mathbb{E}\, L\cdot\sigma\left(U\right)\colon\, U\mbox{ is uniformly distributed}\right\} ,\label{eq:RsigmaU}
\end{equation}
where the infimum is attained if $L$ and $U$ are coupled in a co-monotone
way. \end{cor}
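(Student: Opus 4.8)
The plan is to read the statement as a specialization of Theorem~\ref{thm1}: I would restrict the dual feasible set in \eqref{eq:Dual} to the particular random variables $Z=\sigma(U)$ with $U$ uniform, and then exhibit one co-monotone coupling that attains the bound. The inequality ``$\le$'' will follow from \emph{feasibility} of every such $\sigma(U)$, while ``$\ge$'' together with attainment will follow from the probability integral transform. The typo in the statement (``infimum'') should of course read ``supremum''.

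First I would record the distributional fact on which everything hinges: if $U$ is uniformly distributed and $\sigma$ is nondecreasing, then the quantile of $\sigma(U)$ is $\sigma$ itself, i.e.\ $F_{\sigma(U)}^{-1}=\sigma$ almost everywhere, because the sublevel sets $\{u:\sigma(u)\le y\}$ are initial intervals of $[0,1]$ and the generalized inverse of the resulting cumulative distribution function coincides with $\sigma$. Consequently $\mathbb{E}\,\sigma(U)=\int_0^1\sigma(p)\,\mathrm{d}p=1$ and $(1-\alpha)\CTE_\alpha(\sigma(U))=\int_\alpha^1 F_{\sigma(U)}^{-1}(p)\,\mathrm{d}p=\int_\alpha^1\sigma(p)\,\mathrm{d}p$ for every $\alpha$, so each $Z=\sigma(U)$ satisfies the majorization $\sigma(U)\preccurlyeq\sigma$ with equality in every defining relation and is therefore admissible in Theorem~\ref{thm1}. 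This yields
\[
\sup\left\{\mathbb{E}\,L\,\sigma(U):U\text{ uniform}\right\}\le\sup\left\{\mathbb{E}\,LZ:Z\preccurlyeq\sigma\right\}=\pi_\sigma(L).
\]

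For the reverse inequality and the attainment claim I would invoke the probability integral transform~\eqref{eq:2-1}: on the probability space carrying $L$ choose $U$ with $L=F_L^{-1}(U)$ almost surely (for instance $U=F_L(L)$ when $F_L$ is invertible). Then $L=F_L^{-1}(U)$ and $\sigma(U)$ are both nondecreasing functions of the single variable $U$, hence co-monotone, and the expectation evaluates explicitly to
\[
\mathbb{E}\,L\,\sigma(U)=\mathbb{E}\,F_L^{-1}(U)\,\sigma(U)=\int_0^1 F_L^{-1}(u)\,\sigma(u)\,\mathrm{d}u=\pi_\sigma(L)
\]
by the defining equation~\eqref{eq:Rsigma}. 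Combined with the upper bound, this establishes \eqref{eq:RsigmaU} and identifies the co-monotone coupling as a maximizer.

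I expect the only delicate point to be the distributional identity $F_{\sigma(U)}^{-1}=\sigma$, which is precisely where monotonicity of $\sigma$ enters essentially; the remaining steps are direct substitutions. An alternative route that bypasses Theorem~\ref{thm1} entirely is to apply the Hardy--Littlewood rearrangement inequality (already used in the proof of Theorem~\ref{thm1}) directly to the pair $L$ and $\sigma(U)$: it gives $\mathbb{E}\,L\,\sigma(U)\le\int_0^1 F_L^{-1}(p)\,F_{\sigma(U)}^{-1}(p)\,\mathrm{d}p=\int_0^1 F_L^{-1}(p)\,\sigma(p)\,\mathrm{d}p=\pi_\sigma(L)$, with equality exactly in the co-monotone case, thereby reproving the bound and the attainment statement simultaneously.
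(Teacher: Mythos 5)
Your proposal is correct and follows essentially the same route as the paper: you establish $F_{\sigma(U)}^{-1}=\sigma$ almost everywhere, use it to show each $Z=\sigma(U)$ is feasible in the dual representation of Theorem~\ref{thm1} (giving the bound $\mathbb{E}\,L\,\sigma(U)\le\pi_\sigma(L)$), and obtain attainment from the co-monotone coupling, exactly as the paper does. The only cosmetic differences are your sublevel-set justification of $F_{\sigma(U)}^{-1}=\sigma$ (the paper instead combines a one-sided quantile inequality with the integral identity $\mathbb{E}\,\sigma(U)=1$) and your direct substitution $L=F_L^{-1}(U)$ in place of the paper's appeal to the co-monotone product formula $\mathbb{E}\,LZ=\int_0^1 F_L^{-1}(\alpha)F_Z^{-1}(\alpha)\,\mathrm{d}\alpha$.
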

\begin{rem}
The statement of the corollary implicitly and tacitly assumes that
the probability space is rich enough to carry a uniform random variable.
This is certainly the case if the probability space does not contain
atoms. But even if the probability space has atoms, then this is not
a restriction neither, as any probability space with atoms can be
augmented to allow a uniformly distributed random variable. \end{rem}
\begin{proof}
Consider $Z:=\sigma\left(U\right)$ for a uniformly distributed random
variable $U$, then $P\left(Z\le\sigma\left(\alpha\right)\right)=P\left(\sigma\left(U\right)\le\sigma\left(\alpha\right)\right)\ge P\left(U\le\alpha\right)=\alpha$,
that is $F_{Z}^{-1}\left(\alpha\right)\ge\sigma\left(\alpha\right)$.
But as $1=\int_{0}^{1}\sigma\left(\alpha\right)\mathrm{d}\alpha$
$\le\int_{0}^{1}F_{\sigma\left(U\right)}^{-1}\left(\alpha\right)\mathrm{d}\alpha$
$=\mathbb{E}\sigma\left(U\right)=1$ it follows that 
\[
F_{\sigma\left(U\right)}^{-1}\left(\cdot\right)=\sigma\left(\cdot\right)
\]
 almost everywhere. Observe now that any $Z$ with $F_{Z}^{-1}\left(\alpha\right)\le\sigma\left(\alpha\right)$
is feasible for \eqref{eq:Dual}, because 
\[
\int_{\alpha}^{1}\sigma\left(p\right)\mathrm{d}p\ge\int_{\alpha}^{1}F_{Z}^{-1}\left(p\right)\mathrm{d}p=(1-\alpha)\CTE_{\alpha}\left(Z\right)
\]
and $\mathbb{E}Z=\mathbb{E}\sigma\left(U\right)=\int_{0}^{1}\sigma\left(\alpha\right)\mathrm{d}\alpha=1$.
Now let $U$ be coupled in an co-monotone way with $L$, then $\mathbb{E}LZ=\int_{0}^{1}F_{L}^{-1}\left(\alpha\right)F_{Z}^{-1}\left(\alpha\right)\mathrm{d}\alpha=\int_{0}^{1}F_{L}^{-1}\left(\alpha\right)F_{\sigma\left(U\right)}^{-1}\left(\alpha\right)\mathrm{d}\alpha=\int_{0}^{1}F_{L}^{-1}\left(\alpha\right)\sigma\left(\alpha\right)\mathrm{d}\alpha$
such that
\[
\pi_{\sigma}\left(L\right)=\sup\left\{ \mathbb{E}\, L\sigma\left(U\right)\colon U\mbox{ uniformly distributed}\right\} ,
\]
which is finally the second assertion.
\end{proof}
The characterization derived in the previous theorem for spectral
premium functionals naturally applies to the conditional tail expectation
itself. The expression can be simplified further to give the dual
representation, which is often used to define the conditional tail
expectation. The second statement exhibits an interesting, ``recursive''
structure. 
\begin{cor}
\label{cor:AVaR}The conditional tail expectation at level $\alpha$
obeys the dual representations 
\begin{align*}
\CTE_{\alpha}\left(L\right) & =\sup\left\{ \mathbb{E}LZ:\,\mathbb{E}Z=1,\,0\le Z,\,(1-\alpha)Z\le\one\right\} \\
 & =\sup\left\{ \mathbb{E}LZ:\,\mathbb{E}Z=1,\,\CTE_{p}\left(Z\right)\le\frac{1}{1-\alpha}\mbox{ for all }p>\alpha\right\} .
\end{align*}
\end{cor}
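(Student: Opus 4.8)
The plan is to read both identities off Theorem~\ref{thm1} by specialising it to the indicator distortion $\sigma_{\alpha}$ of \eqref{eq:sigma}, for which $\CTE_{\alpha}=\pi_{\sigma_{\alpha}}$. The only quantity entering the feasibility description of \eqref{eq:Dual} is the tail integral $\int_{\beta}^{1}\sigma_{\alpha}(p)\,\mathrm{d}p$, and a one-line computation gives
\[
\int_{\beta}^{1}\sigma_{\alpha}(p)\,\mathrm{d}p=\begin{cases}\frac{1-\beta}{1-\alpha} & \text{if }\beta\ge\alpha,\\ 1 & \text{if }\beta<\alpha.\end{cases}
\]
Hence the majorization $Z\preccurlyeq\sigma_{\alpha}$ unfolds into the three conditions $\mathbb{E}Z=1$; the \emph{upper-range} bound $\CTE_{\beta}(Z)\le\frac{1}{1-\alpha}$ for $\beta\ge\alpha$; and the \emph{lower-range} bound $(1-\beta)\CTE_{\beta}(Z)\le1$ for $\beta<\alpha$. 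Everything reduces to re-expressing this feasible set in the two claimed ways.

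For the first representation I would show $\left\{ Z\colon Z\preccurlyeq\sigma_{\alpha}\right\} =\left\{ Z\colon\mathbb{E}Z=1,\ 0\le Z\le\frac{1}{1-\alpha}\right\}$ and then invoke \eqref{eq:Dual}. The inclusion ``$\subseteq$'' uses two ingredients: letting $\beta\nearrow1$ in $\CTE_{\beta}(Z)\le\frac{1}{1-\alpha}$ forces $\esssup Z\le\frac{1}{1-\alpha}$, that is $(1-\alpha)Z\le\one$, while $Z\ge0$ is exactly the nonnegativity recorded in the Remark preceding the proof of Theorem~\ref{thm1}. The converse is the easy direction: if $0\le F_{Z}^{-1}\le\frac{1}{1-\alpha}$, then $\int_{\beta}^{1}F_{Z}^{-1}\le\frac{1-\beta}{1-\alpha}$ for $\beta\ge\alpha$ by the pointwise upper bound, and $\int_{\beta}^{1}F_{Z}^{-1}\le\int_{0}^{1}F_{Z}^{-1}=1$ for $\beta<\alpha$ by nonnegativity, so all three conditions hold. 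The supremum identity is then immediate.

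For the second representation the point is that, \emph{once nonnegativity is in force}, the lower-range conditions become redundant: for $Z\ge0$ with $\mathbb{E}Z=1$ one has $(1-\beta)\CTE_{\beta}(Z)=\int_{\beta}^{1}F_{Z}^{-1}(p)\,\mathrm{d}p\le\int_{0}^{1}F_{Z}^{-1}(p)\,\mathrm{d}p=1$ automatically, so the constraints at $\beta<\alpha$ carry no information and only $\CTE_{\beta}(Z)\le\frac{1}{1-\alpha}$ for $\beta>\alpha$ survives. This yields precisely the second display; continuity and monotonicity of $\beta\mapsto\CTE_{\beta}(Z)$ make the phrasing with strict inequality $p>\alpha$ equivalent to the closed one, and the ``recursive'' flavour is simply that the feasible set for $\CTE_{\alpha}$ is described through the functionals $\CTE_{p}(Z)$ themselves.

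The step I expect to be the obstacle is the nonnegativity of $Z$, which must be \emph{carried through} the second representation rather than deduced from its displayed constraints alone. The bounds $\CTE_{p}(Z)\le\frac{1}{1-\alpha}$ for $p>\alpha$ force only $\esssup Z\le\frac{1}{1-\alpha}$ and, by running the Remark's argument at the level $p=P(Z<0)$, the estimate $P(Z<0)\le\alpha$; they do \emph{not} by themselves give $Z\ge0$, since a negative mass carried on a set of probability at most $\alpha$ is compatible with all of them and would enlarge the supremum. Consequently the reduction of the lower-range constraints, and hence the validity of the displayed formula, genuinely relies on retaining the nonnegativity $0\le Z$ inherited from the first representation; this interplay is the delicate point that must be stated with care.
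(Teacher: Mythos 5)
Your proposal is correct and follows essentially the same route as the paper's own proof: specialize Theorem~\ref{thm1} to $\sigma_{\alpha}$, compute $\int_{\beta}^{1}\sigma_{\alpha}\left(p\right)\mathrm{d}p=\min\left\{ 1,\tfrac{1-\beta}{1-\alpha}\right\} $, and then simplify the resulting constraint set into the two displayed forms.

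However, the point you flag as ``delicate'' deserves to be stated more strongly: it is a genuine defect of the corollary as displayed, which your version corrects and which the paper's own proof glosses over. The paper's argument inserts the hypothesis ``for $Z\ge0$'' in mid-proof (legitimate for the original constraint set, where the bound $\CTE_{p}\left(Z\right)\le\frac{1}{1-p}$ at the level $p=P\left(Z<0\right)$ forces nonnegativity, by the Remark preceding the proof of Theorem~\ref{thm1}), but nonnegativity then silently vanishes from the final display. As you observe, the displayed constraints $\mathbb{E}Z=1$ and $\CTE_{p}\left(Z\right)\le\frac{1}{1-\alpha}$ for $p>\alpha$ force only $\esssup Z\le\frac{1}{1-\alpha}$ and $P\left(Z<0\right)\le\alpha$, not $Z\ge0$, and the supremum can then strictly exceed $\CTE_{\alpha}\left(L\right)$. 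A concrete instance confirming your claim: take $\alpha=\frac{1}{2}$, let $Z=-\frac{1}{10}$ with probability $\frac{1}{10}$ and $Z=\frac{101}{90}$ with probability $\frac{9}{10}$, so that $\mathbb{E}Z=1$ and $\CTE_{p}\left(Z\right)\le\esssup Z=\frac{101}{90}<2$ for every $p$; and let $L=-M$ with probability $\frac{1}{10}$ and $L=0$ otherwise ($M>0$), coupled co-monotonically with $Z$. Then $\CTE_{1/2}\left(L\right)=0$, while
\[
\mathbb{E}\, LZ=\frac{1}{10}\cdot\left(-M\right)\cdot\left(-\frac{1}{10}\right)=\frac{M}{100}>0,
\]
so the second display, read literally, is false. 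Your resolution---retaining $0\le Z$ (equivalently, given $\mathbb{E}Z=1$, the ``lower-range'' constraints $\left(1-p\right)\CTE_{p}\left(Z\right)\le1$) as part of the feasible set in the second representation---is the correct statement, and your proof of it is complete: once $Z\ge0$ and $\mathbb{E}Z=1$ are in force, the constraints at levels $p\le\alpha$ are automatic, which is exactly the reduction the paper intends but does not justify at the level of the displayed formula.
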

\begin{proof}
The conditional tail expectation at level $\alpha$ is provided by
the Dirac measure $\mu_{\alpha}\left(A\right):=\delta_{\alpha}\left(A\right)=\begin{cases}
1 & \mbox{if }\alpha\in A\\
0 & \text{otherwise}
\end{cases}$, and the respective distortion function is $\sigma_{\alpha}$ (cf.~\eqref{eq:sigma}).
It follows from $\int_{p}^{1}\sigma_{\alpha}\left(p^{\prime}\right)\mathrm{d}p^{\prime}=\min\left\{ 1,\,\frac{1-p}{1-\alpha}\right\} $
and Theorem~\ref{thm1} that 

\[
\CTE_{\alpha}\left(L\right)=\inf\left\{ \mathbb{E}LZ:\,\mathbb{E}Z=1,\,\CTE_{p}\left(Z\right)\le\min\left\{ \frac{1}{1-p},\frac{1}{1-\alpha}\right\} \right\} .
\]
Observe next that for $Z\ge0$ 
\begin{align*}
\frac{1}{1-p}=\frac{1}{1-p}\mathbb{E}Z & \ge\frac{1}{1-p}\int_{p}^{1}F_{Z}^{-1}\left(p^{\prime}\right)\mathrm{d}p^{\prime}=\CTE_{p}\left(Z\right),
\end{align*}
hence
\[
\CTE_{\alpha}\left(L\right)=\inf\left\{ \mathbb{E}LZ:\,\mathbb{E}Z=1,\,\CTE_{p}\left(Z\right)\le\frac{1}{1-\alpha}\right\} .
\]
For $p\le\alpha$, in addition, $\CTE_{p}\left(Z\right)\le\frac{1}{1-p}\le\frac{1}{1-\alpha}$. 

This proves the second assertion.

As for the first observe that $\frac{1}{1-\alpha}\ge\CTE_{p}\left(Z\right)\rightarrow\esssup Z$,
hence $\left(1-\alpha\right)Z\le\one$; conversely, if $0\le Z$ and
$(1-\alpha)Z\le\one$, then 
\[
\frac{1}{1-\alpha}\ge\esssup Z\ge\CTE_{p}\left(Z\right),
\]
which is the first assertion.
\end{proof}

\section{\label{sec:Infimum-Representation}Infimum Representation Of Distortion
Pre\-mium Functionals }

The latter Theorem~\ref{thm1} exposes the distorted risk premium
as a supremum and characterizes the convex conjugate function by stochastic
dominance constraints. The following theorem, the second main result
of this article, provides a description in opposite terms, as an infimum.
The representation extends the well known formula for the conditional
tail expectation (Average Value-at-Risk) provided in \cite{RockafellarUryasev2000},
finally stated in the present form in \cite{Pflug2000}. 

This alternative description allows an alternative view on distortions
and alternative simulations, as is the content of the following section. 
\begin{thm}[Representation as an Infimum]
\label{thm:InfRep}For any $L\in L^{\infty}$ the distorted premium
functional with distortion $\sigma$ has the representation 
\begin{equation}
\pi_{\sigma}\left(L\right)=\inf_{h}\,\mathbb{E}\, h\left(L\right)+\int_{0}^{1}h^{*}\left(\sigma\left(p\right)\right)\mathrm{d}p,\label{eq:HS}
\end{equation}
where the infimum is among all arbitrary, measurable functions $h\colon\mathbb{R}\rightarrow\mathbb{R}$
and $h^{*}$ is $h$'s convex conjugate function %
\footnote{The convex conjugate function of $h$  is $h^{*}\left(y\right):=\sup_{x}\, x\cdot y-h\left(x\right)$.
The convex conjugate may evaluate to $+\infty$.%
}.\end{thm}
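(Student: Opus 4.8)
The plan is to obtain both inequalities in \eqref{eq:HS} from the Fenchel--Young inequality
\[
x\cdot y\le h\left(x\right)+h^{*}\left(y\right),
\]
valid for every $x,y\in\mathbb{R}$, every function $h$ and its conjugate $h^{*}$. The bridge to $\pi_{\sigma}$ is the representation \eqref{eq:RsigmaU}: writing $L=F_{L}^{-1}\left(U\right)$ through the probability integral transform \eqref{eq:2-1} with $U$ uniformly distributed and coupled co-monotonically with $L$, one has $\pi_{\sigma}\left(L\right)=\mathbb{E}\,L\,\sigma\left(U\right)$, and we may take $U=F_{L}\left(L\right)$.

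First I would treat the direction $\pi_{\sigma}\left(L\right)\le\inf_{h}\bigl[\mathbb{E}\,h\left(L\right)+\int_{0}^{1}h^{*}\left(\sigma\left(p\right)\right)\mathrm{d}p\bigr]$, which holds for \emph{every} $h$. Fixing $h$ and inserting $x=L\left(\omega\right)$, $y=\sigma\left(U\left(\omega\right)\right)$ into Fenchel--Young gives, pointwise almost surely,
\[
L\cdot\sigma\left(U\right)\le h\left(L\right)+h^{*}\left(\sigma\left(U\right)\right).
\]
Taking expectations and using that $U$ is uniform (so that $\mathbb{E}\,h^{*}\left(\sigma\left(U\right)\right)=\int_{0}^{1}h^{*}\left(\sigma\left(p\right)\right)\mathrm{d}p$) yields $\pi_{\sigma}\left(L\right)\le\mathbb{E}\,h\left(L\right)+\int_{0}^{1}h^{*}\left(\sigma\left(p\right)\right)\mathrm{d}p$; if the right-hand side is $+\infty$ this is vacuous. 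As $h$ was arbitrary, the asserted inequality follows. One may moreover restrict the infimum to convex $h$ without loss, since passing to the biconjugate $h^{**}\le h$ leaves $h^{*}$ unchanged while not increasing $\mathbb{E}\,h\left(L\right)$.

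The substantial step is to construct a minimizer turning Fenchel--Young into an equality almost surely. As equality at $\left(x,y\right)$ is equivalent to $y\in\partial h\left(x\right)$, I would build $h$ from the monotone coupling by prescribing its slope along the range of $L$: set
\[
h\left(x\right):=\int_{c}^{x}\sigma\bigl(F_{L}\left(t\right)\bigr)\,\mathrm{d}t
\]
for a fixed reference point $c$. Because $F_{L}$ and $\sigma$ are both nondecreasing, the integrand $\sigma\circ F_{L}$ is nondecreasing, so $h$ is convex with $h^{\prime}\left(x\right)=\sigma\left(F_{L}\left(x\right)\right)$ at points of differentiability. Along the coupling $U=F_{L}\left(L\right)$ this gives $\sigma\left(U\right)=\sigma\left(F_{L}\left(L\right)\right)\in\partial h\left(L\right)$ almost surely, so that $L\cdot\sigma\left(U\right)=h\left(L\right)+h^{*}\left(\sigma\left(U\right)\right)$ holds pointwise; integrating this identity forces equality in \eqref{eq:HS} and exhibits the infimum as attained.

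The main obstacle is the integrability bookkeeping, which is exactly where the hypothesis $L\in L^{\infty}$ enters. The left-hand side $L\cdot\sigma\left(U\right)$ is integrable because $L$ is bounded and $\sigma\in L^{1}$; meanwhile both summands on the right are bounded below by affine minorants---$h\left(x\right)\ge h\left(c\right)+\sigma\left(F_{L}\left(c\right)\right)\left(x-c\right)$ by convexity, and $h^{*}\left(y\right)\ge c\,y-h\left(c\right)$ by definition of the conjugate. Hence $\mathbb{E}\,h\left(L\right)$ and $\int_{0}^{1}h^{*}\left(\sigma\left(p\right)\right)\mathrm{d}p$ each lie in $\left(-\infty,+\infty\right]$; since their pointwise sum is the integrable quantity $L\,\sigma\left(U\right)$ with finite integral $\pi_{\sigma}\left(L\right)$, both must in fact be finite and add up to $\pi_{\sigma}\left(L\right)$. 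This legitimizes the term-by-term integration even when $\sigma$ is unbounded near $1$, and is what makes the constructed $h$ a genuine minimizer. The remaining care concerns atoms of $L$ (flat pieces of $F_{L}$) and gaps in its range (jumps of $F_{L}$), where $h$ fails to be differentiable; there the inclusion $\sigma\left(U\right)\in\partial h\left(L\right)$ is verified through the one-sided derivatives, so that the Fenchel--Young equality---and with it \eqref{eq:HS}---persists almost surely.
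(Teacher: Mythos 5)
Your proof is correct, and its first half coincides with the paper's: both derive $\pi_{\sigma}\left(L\right)\le\mathbb{E}\,h\left(L\right)+\int_{0}^{1}h^{*}\left(\sigma\left(p\right)\right)\mathrm{d}p$ for every $h$ from the Fenchel--Young inequality combined with the supremum representation \eqref{eq:RsigmaU}. Where you genuinely diverge is the attainment half. The paper constructs the minimizer as a mixture of pinball (CTE-type) functions over the Kusuoka measure $\mu_{\sigma}$ of \eqref{eq:1}, namely $h_{\sigma}\left(y\right)=\int_{0}^{1}F_{L}^{-1}\left(\alpha\right)+\tfrac{1}{1-\alpha}\left(y-F_{L}^{-1}\left(\alpha\right)\right)_{+}\mu_{\sigma}\left(\mathrm{d}\alpha\right)$, and verifies optimality by two explicit computations: $\mathbb{E}\,h_{\sigma}\left(L\right)=\pi_{\sigma}\left(L\right)$ via Fubini and the Rockafellar--Uryasev formula for $\CTE_{\alpha}$, and $\int_{0}^{1}h_{\sigma}^{*}\left(\sigma\left(u\right)\right)\mathrm{d}u=0$ by evaluating the conjugate at $y=F_{L}^{-1}\left(u\right)$. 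You instead define $h$ directly as an antiderivative of $\sigma\circ F_{L}$ and verify optimality through the equality case of Fenchel--Young (the subgradient condition $y\in\partial h\left(x\right)$), then integrate the resulting pointwise identity. The two minimizers are in fact the same convex function up to an additive constant---the paper computes $h_{\sigma}^{\prime}=\sigma\circ F_{L}$ in \eqref{eq:18}, and the objective is invariant under additive constants by Lemma~\ref{lem:Transform}---but your route bypasses the Kusuoka measure, Fubini's theorem and the CTE inf-formula entirely, and it makes the mechanism of equality transparent; the paper's route, in exchange, produces the minimizer in the form that is reused later in Section~\ref{sec:StochasticOptimization}. One caveat: your parenthetical ``we may take $U=F_{L}\left(L\right)$'' is valid only when $F_{L}$ is continuous; for general $L$ you must work with the uniform $U$ of \eqref{eq:2-1} satisfying $L=F_{L}^{-1}\left(U\right)$ a.s., which is also what your atom-handling argument at the end actually uses (on the event $\left\{ L=x_{0}\right\} $ one has $U\in\left[F_{L}\left(x_{0}-\right),F_{L}\left(x_{0}\right)\right]$, whence $\sigma\left(U\right)\in\partial h\left(x_{0}\right)$ by the one-sided derivative bounds). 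With that reading, your argument is complete.
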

\begin{rem}
Having a look at representation \eqref{eq:HS} it is not immediate
that the axioms of Definition~\ref{Def-RiskMeasure} are satisfied.
The transformations listed in Lemma~\ref{lem:Transform} in the Appendix
can be used in a straight forward manner to deduce the properties
directly from \eqref{eq:HS}. 
\end{rem}
The statement of the Inf-Representation Theorem~\ref{thm:InfRep}
can be formulated equivalently in the following ways.
\begin{cor}
\label{cor:5}For any $L\in L^{\infty}$ the distorted risk premium
with distortion $\sigma$ allows the representations 
\begin{eqnarray}
\pi_{\sigma}\left(L\right) & = & \inf_{f\text{ convex}}\,\mathbb{E}\, h\left(L\right)+\int_{0}^{1}h^{*}\left(\sigma\left(p\right)\right)\mathrm{d}p\nonumber \\
 & = & \inf\left\{ \mathbb{E}\, h\left(L\right):\int_{0}^{1}h^{*}\left(\sigma\left(p\right)\right)\mathrm{d}p\le0\right\} ,\label{eq:hSigma}
\end{eqnarray}
where the latter infimum is among arbitrary, measurable functions
$h\colon\mathbb{R}\rightarrow\mathbb{R}$.\end{cor}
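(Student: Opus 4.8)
The plan is to derive both identities directly from the Inf-Representation Theorem~\ref{thm:InfRep}, exploiting two elementary features of the objective $\mathbb{E}\, h(L)+\int_0^1 h^*(\sigma(p))\,\mathrm{d}p$ in \eqref{eq:HS}: its behaviour under passage to the biconjugate, and its invariance under adding a constant to $h$. Neither identity introduces new analysis beyond \eqref{eq:HS}; each is obtained by a reparametrisation of the competitors.

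For the first identity I would replace an arbitrary measurable $h$ by its biconjugate $h^{**}$. Since $h^{**}$ is the largest lower semicontinuous convex minorant of $h$, one has $h^{**}\le h$ pointwise, whence $\mathbb{E}\, h^{**}(L)\le\mathbb{E}\, h(L)$. At the same time conjugation only sees the convex hull, so $(h^{**})^*=h^*$ and the penalty term $\int_0^1 h^*(\sigma(p))\,\mathrm{d}p$ is left unchanged. Thus every measurable competitor is dominated by a convex one of no larger objective value, so restricting the infimum in \eqref{eq:HS} to convex functions cannot raise it, and the reverse inequality is trivial because convex functions form a subclass.

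For the second identity the key observation is that the objective in \eqref{eq:HS} is invariant under the shift $h\mapsto h+c$, $c\in\mathbb{R}$. Indeed $(h+c)^*(y)=h^*(y)-c$, and because $\sigma$ is a distortion with $\int_0^1\sigma(p)\,\mathrm{d}p=1$, the penalty decreases by exactly $c$, compensating the increase $\mathbb{E}\,(h+c)(L)=\mathbb{E}\, h(L)+c$. Given any $h$ of finite penalty I would therefore normalise by choosing $c:=\int_0^1 h^*(\sigma(p))\,\mathrm{d}p$, so that the shifted function satisfies $\int_0^1 (h+c)^*(\sigma(p))\,\mathrm{d}p=0$ while $\mathbb{E}\,(h+c)(L)$ equals the full unconstrained objective of $h$. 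This produces a feasible point for the constrained problem \eqref{eq:hSigma} whose value matches, giving ``$\le$''. Conversely, whenever $h$ satisfies the constraint $\int_0^1 h^*(\sigma(p))\,\mathrm{d}p\le0$ the penalty is nonpositive, so $\mathbb{E}\, h(L)$ dominates the unconstrained objective and hence its infimum, which yields the opposite inequality.

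Neither step is genuinely hard; the only point requiring care is the bookkeeping with the value $+\infty$ in the conjugate $h^*$. One must check that functions forcing an infinite penalty---equivalently, $\sigma(p)\notin\operatorname{dom}h^*$ on a set of positive measure---never realise the infimum, so that on the relevant finite-penalty competitors $\operatorname{dom}h^*$ is nonempty, $h^{**}$ is real valued, and the constant-shift normalisation is always available. The identity $(h+c)^*=h^*-c$ together with $\int_0^1\sigma=1$ is exactly what renders the translation equivariance axiom~\ref{enu:Equivariance} visible in the infimum representation.
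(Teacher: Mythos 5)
Your proof is correct and follows essentially the same route as the paper's: the biconjugate argument ($h^{**}\le h$ pointwise while $\left(h^{**}\right)^{*}=h^{*}$) handles the restriction to convex functions, and the constant-shift normalisation with $c=\int_{0}^{1}h^{*}\left(\sigma\left(p\right)\right)\mathrm{d}p$ handles the constrained form---your sign bookkeeping is in fact the correct one, where the paper's own choice $h_{\alpha}=h-\alpha$ with $\alpha=\int_{0}^{1}h^{*}\left(\sigma\left(p\right)\right)\mathrm{d}p$ contains a small sign slip. One micro-correction: the penalty drops by exactly $c$ simply because $\int_{0}^{1}c\,\mathrm{d}p=c$ (Lebesgue measure on $\left[0,1\right]$ has total mass one), not because $\int_{0}^{1}\sigma\left(p\right)\mathrm{d}p=1$; the normalisation of $\sigma$ is instead what yields translation equivariance of $\pi_{\sigma}$, where the shift sits in the argument and $\left(h\left(\cdot+c\right)\right)^{*}\left(y\right)=h^{*}\left(y\right)-cy$.
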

\begin{proof}[Proof of Corollary~\ref{cor:5}]
It is well known that the bi-conjugate function $h^{**}:=\left(h^{*}\right)^{*}$
is a convex and lower semicontinuous function satisfying $h^{**}\le h$
and $h^{***}=h^{*}$ (cf. the analogous Fenchel--Moreau Theorem and
equation \eqref{eq:Dual-2}). The infimum in \eqref{eq:HS} hence---without
any loss of generality---can be restricted to \emph{convex} functions,
that is 
\[
\pi_{\sigma}\left(L\right)=\inf_{h\text{ convex}}\,\mathbb{E}h\left(L\right)+\int_{0}^{1}h^{*}\left(\sigma\left(p\right)\right)\mathrm{d}p.
\]

As for the second assertion notice first that  clearly 
\begin{eqnarray*}
\pi_{\sigma}\left(L\right) & \le & \inf\left\{ \mathbb{E}h\left(L\right)+\int_{0}^{1}h^{*}\left(\sigma\left(p\right)\right)\mathrm{d}p:\int_{0}^{1}h^{*}\left(\sigma\left(p\right)\right)\mathrm{d}p\le0\right\} \\
 & \le & \inf\left\{ \mathbb{E}h\left(L\right):\int_{0}^{1}h^{*}\left(\sigma\left(p\right)\right)\mathrm{d}p\le0\right\} .
\end{eqnarray*}
Consider $h_{\alpha}\left(x\right):=h(x)-\alpha$ (where $\alpha$
a constant and $h$ arbitrary). It holds that $h_{\alpha}^{*}\left(y\right)=h^{*}(y)+\alpha$,
as exposed by the auxiliary Lemma~\ref{lem:Transform} in the Appendix.
Hence $\int_{0}^{1}h_{\alpha}^{*}\left(\sigma\left(p\right)\right)\mathrm{d}p=\int_{0}^{1}h^{*}\left(\sigma\left(p\right)\right)\mathrm{d}p+\alpha$
and 
\begin{equation}
\mathbb{E}\, h_{\alpha}\left(L\right)+\int_{0}^{1}h_{\alpha}^{*}\left(\sigma\left(p\right)\right)\mathrm{d}p=\mathbb{E}\, h\left(L\right)+\int_{0}^{1}h^{*}\left(\sigma\left(p\right)\right)\mathrm{d}p.\label{eq:12}
\end{equation}
Choose $\alpha:=\int_{0}^{1}h^{*}\left(\sigma\left(p\right)\right)\mathrm{d}p$
such that $\int_{0}^{1}h_{\alpha}^{*}\left(\sigma\left(p\right)\right)\mathrm{d}p=0$.
$h_{\alpha}$ hence is feasible for \eqref{eq:hSigma} with the same
objective as $h$ by \eqref{eq:12}, from which the assertion follows.
\end{proof}

\begin{rem}
Notice that $\sigma$ has its range in the interval $\left\{ \sigma\left(x\right):\, x\in\left[0,1\right]\right\} =\left[0,\sigma\left(1\right)\right]$,
and from convexity of $h^{*}$ it follows that the set $\left\{ h^{*}<\infty\right\} $
is convex. Hence $h^{*}\left(y\right)<\infty$ necessarily has to
hold for all $y\in\left(0,\sigma\left(1\right)\right)$ to ensure
that $\int_{0}^{1}h^{*}\left(\sigma(u)\right)\mathrm{d}u<\infty$.
For $h$ convex this means in turn that 
\[
\lim_{x\to-\infty}h^{\prime}\left(x\right)\le0\text{ and }\lim_{x\to\infty}h^{\prime}\left(x\right)\ge\sigma\left(1\right),
\]
limiting thus the class of interesting functions in Corollary~\ref{cor:5}
to convex functions satisfying $h^{\prime}\left(\mathbb{R}\right)\supset\left(0,\sigma(1)\right)$. \end{rem}
\begin{proof}[Proof of Theorem \ref{thm:InfRep}]
 From the definition of the convex conjugate $h^{*}$ it is immediate
that 
\[
h^{*}\left(\sigma\right)\ge y\cdot\sigma-h\left(y\right)
\]
for all numbers $y$ and $\sigma$ (this is often called \emph{Fenchel--Young
inequality}), hence 
\[
h\left(L\right)+h^{*}\left(\sigma\left(U\right)\right)\ge L\cdot\sigma\left(U\right),
\]
where $U$ is any uniformly distributed random variable, i.e.\ $U$
satisfies $P\left(U\le u\right)=u$. Taking expectations it follows
that 
\[
\mathbb{E}h\left(L\right)+\mathbb{E}h^{*}\left(\sigma\left(U\right)\right)\ge\mathbb{E}\, L\cdot\sigma\left(U\right).
\]
As $U$ is uniformly distributed it holds that 
\[
\mathbb{E}h^{*}\left(\sigma\left(U\right)\right)=\int_{0}^{1}h^{*}\left(\sigma\left(u\right)\right)\mathrm{d}u,
\]
such that 
\[
\mathbb{E}h\left(L\right)+\int_{0}^{1}h^{*}\left(\sigma\left(u\right)\right)\mathrm{d}u\ge\mathbb{E}\, L\cdot\sigma\left(U\right),
\]
irrespective of the uniform random variable $U$. Hence, by $\eqref{eq:RsigmaU}$
in Corollary~\ref{thm1}, 
\[
\mathbb{E}h\left(L\right)+\int_{0}^{1}h^{*}\left(\sigma\left(u\right)\right)\mathrm{d}u\ge\sup_{U\text{ uniform}}\mathbb{E}\, L\cdot\sigma\left(U\right)=\pi_{\sigma}\left(L\right),
\]
 establishing the inequality 
\[
\pi_{\sigma}\left(L\right)\le\mathbb{E}h\left(L\right)+\int_{0}^{1}h^{*}\left(\sigma\left(u\right)\right)\mathrm{d}u.
\]

As for the converse inequality consider the function 
\begin{equation}
h_{\sigma}\left(y\right):=\int_{0}^{1}F_{L}^{-1}\left(\alpha\right)+\frac{1}{1-\alpha}\left(y-F_{L}^{-1}\left(\alpha\right)\right)_{+}\mu_{\sigma}\left(\mathrm{d}\alpha\right).\label{eq:Optimum}
\end{equation}
$h_{\sigma}\left(y\right)$ is well defined for all $y$ because $L\in L^{\infty}$;
$h_{\sigma}\left(y\right)$ is moreover increasing and convex, because
$y\mapsto\left(y-q\right)_{+}$ is increasing and convex, and because
$\mu_{\sigma}$ is positive. 

Recall the formula 
\[
\CTE_{\alpha}\left(L\right)=\inf_{q\in\mathbb{R}}\: q+\frac{1}{1-\alpha}\mathbb{E}\left(L-q\right)_{+}
\]
and the fact that the infimum is attained at $q=F_{L}^{-1}\left(\alpha\right)$
(cf.~\cite{Pflug2000} or \cite[Section 4.1]{Goovaerts2012} for
the general  formula), providing thus the explicit form 
\[
\CTE_{\alpha}\left(L\right)=F_{L}^{-1}\left(\alpha\right)+\frac{1}{1-\alpha}\mathbb{E}\left(L-F_{L}^{-1}\left(\alpha\right)\right)_{+}.
\]
Note now that, by Fubini's Theorem, 
\begin{align}
\pi_{\sigma}\left(L\right) & =\int_{0}^{1}\CTE_{\alpha}\left(L\right)\mu_{\sigma}\left(\mathrm{d}\alpha\right)\nonumber \\
 & =\int_{0}^{1}F_{L}^{-1}\left(\alpha\right)+\frac{1}{1-\alpha}\mathbb{E}\left(L-F_{L}^{-1}\left(\alpha\right)\right)_{+}\mu_{\sigma}\left(\mathrm{d}\alpha\right)\nonumber \\
 & =\mathbb{E}\int_{0}^{1}\, F_{L}^{-1}\left(\alpha\right)+\frac{1}{1-\alpha}\left(L-F_{L}^{-1}\left(\alpha\right)\right)_{+}\mu_{\sigma}\left(\mathrm{d}\alpha\right)\nonumber \\
 & =\mathbb{E}\, h_{\sigma}\left(L\right).\label{eq:f0}
\end{align}

To establish the assertion \eqref{eq:HS} it needs to be shown that
$\int_{0}^{1}h_{\sigma}^{*}\left(\sigma\left(u\right)\right)\mathrm{d}u\le0$.
For this observe first that $h_{\sigma}$ is almost everywhere differentiable
(because it is convex), with derivative  
\begin{eqnarray}
h_{\sigma}^{\prime}\left(y\right) & = & \int_{\left\{ \alpha\colon F_{L}^{-1}\left(\alpha\right)\le y\right\} }\frac{1}{1-\alpha}\mu_{\sigma}\left(\mathrm{d}\alpha\right)\nonumber \\
 & = & \int_{0}^{F_{L}(y)}\frac{1}{1-\alpha}\mu_{\sigma}\left(\mathrm{d}\alpha\right)=\sigma\left(F_{L}(y)\right)\label{eq:18}
\end{eqnarray}
(almost everywhere) by relation \eqref{eq:3-1}. Moreover $h_{\sigma}^{*}\left(\sigma(u)\right)=\sup_{y}\sigma(u)\cdot y-h_{\sigma}\left(y\right)$,
the supremum being attained at every $y$ satisfying $\sigma(u)=h_{\sigma}^{\prime}\left(y\right)=\sigma\left(F_{L}(y)\right)$,
hence at $y=F_{L}^{-1}\left(u\right)$, and it follows that  
\[
h_{\sigma}^{*}\left(\sigma\left(u\right)\right)=\sigma\left(u\right)\cdot F_{L}^{-1}\left(u\right)-h_{\sigma}\left(F_{L}^{-1}\left(u\right)\right).
\]
Now 
\begin{eqnarray*}
\int_{0}^{1}h_{\sigma}^{*}\left(\sigma\left(u\right)\right)\mathrm{d}u & = & \int_{0}^{1}\sigma\left(u\right)\cdot F_{L}^{-1}\left(u\right)\mathrm{d}u-\int_{0}^{1}h_{\sigma}\left(F_{L}^{-1}\left(u\right)\right)\mathrm{d}u\\
 & = & \pi_{\sigma}\left(L\right)-\mathbb{E}h_{\sigma}\left(L\right).
\end{eqnarray*}
But it was established already in \eqref{eq:f0} that $\pi_{\sigma}\left(L\right)=\mathbb{E}h_{\sigma}\left(L\right)$,
so that 
\[
\int_{0}^{1}h_{\sigma}^{*}\left(\sigma\left(u\right)\right)\mathrm{d}u=0.
\]
This finally proves the second inequality. 
\end{proof}

\paragraph*{$\CTE$ as a special case.}

The conditional tail expectation is a special case of the infimum
in \eqref{eq:HS}. Indeed, it follows from \eqref{eq:Optimum} in
the proof that the infimum is attained at a function of the form $h_{q}\left(y\right)=q+\frac{1}{1-\alpha}\left(y-q\right)_{+}$
with conjugate 
\[
h_{q}^{*}\left(x\right)=\begin{cases}
-q+q\, x & \text{ if }0\le x\le\frac{1}{1-\alpha}\\
\infty & \text{ else}.
\end{cases}
\]
It holds that 
\begin{eqnarray*}
\int_{0}^{1}h_{\sigma}^{*}\left(\sigma_{\alpha}\left(x\right)\right)\mathrm{d}x & = & \int_{0}^{\alpha}h_{\sigma}^{*}\left(0\right)\mathrm{d}x+\int_{\alpha}^{1}h_{\sigma}^{*}\left(\frac{1}{1-\alpha}\right)\mathrm{d}x\\
 & = & -\alpha q+\left(-q+\frac{q}{1-\alpha}\right)(1-\alpha)=0,
\end{eqnarray*}
such that 
\begin{equation}
\CTE_{\alpha}\left(L\right)=\inf_{q\in\mathbb{R}}\:\mathbb{E}\, h_{q}\left(L\right)=\inf_{q}\: q+\frac{1}{1-\alpha}\mathbb{E}\left(L-q\right)_{+},\label{eq:AVaR}
\end{equation}
the classical result. Clearly, the infimum in \eqref{eq:AVaR} is
in $\mathbb{R}$, a much smaller space than convex functions from
$\mathbb{R}$ to $\mathbb{R}$, as required in \eqref{eq:HS}.

\section{\label{sec:StochasticOptimization}Implications for Actuarial Science
and Claim Sampling}

\subsection{Comparison of $L_{\sigma}$ and $L_{\sigma}^{\prime}$}

\begin{figure}
\subfloat[The function $h_{\sigma}$ of the normal distribution for distortion
function $\sigma(u)=0.7+0.9u^{2}$.]{\includegraphics[width=0.45\textwidth]{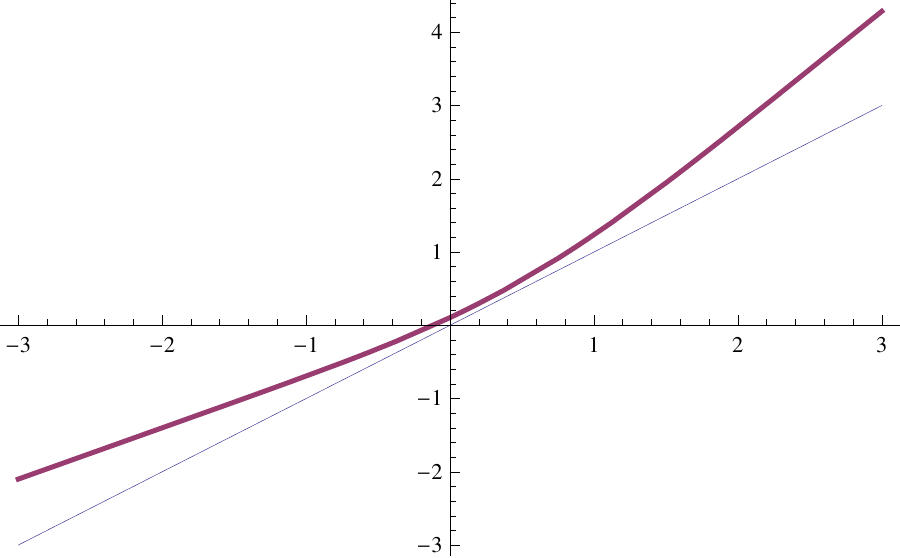}

}$\qquad$\subfloat[The density of the normal distribution, together with the density
of $L_{\sigma}$ (right mode) and density of $L_{\sigma}^{\prime}$
(left mode).]{\includegraphics[width=0.49\textwidth]{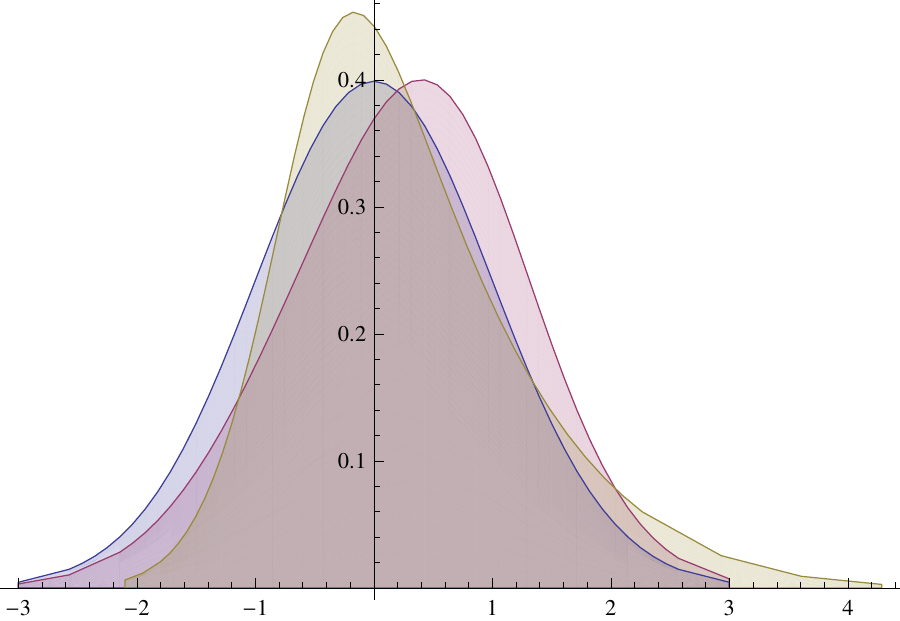}

}

\caption{\label{fig:1}Distortion of the standard normal distribution.}
\end{figure}
In the introductory discussion it was outlined that claims can be
sampled (based on \eqref{eq:2-1}) by use of 
\[
L_{\sigma}=F_{L}^{-1}\left(\tau_{\sigma}^{-1}\left(U\right)\right).
\]
It is obvious by this formula that the distorted claims $L_{\sigma}$
have the same outcomes as $L$, but their probability is disturbed
by involvement of the function $\tau_{\sigma}$. 

The infimum representation developed in Section~\ref{sec:Infimum-Representation}
suggests to consider the random variable 
\[
L_{\sigma}^{\prime}:=h_{\sigma}\left(L\right),
\]
where $h_{\sigma}$ is the function defined in \eqref{eq:Optimum},
and which is the optimal function for problem \eqref{eq:hSigma}.
For this function it holds that
\[
\mathbb{E}\, L_{\sigma}^{\prime}=\pi_{\sigma}\left(L\right)=\mathbb{E}\, L_{\sigma},
\]
because $\int_{0}^{1}h_{\sigma}^{*}\left(\sigma\left(u\right)\right)\mathrm{d}u=0$.
We have moreover that 
\[
h_{\sigma}\left(y\right)=\int_{0}^{1}F_{L}^{-1}\left(\alpha\right)+\frac{1}{1-\alpha}\left(y-F_{L}^{-1}\left(\alpha\right)\right)_{+}\mu_{\sigma}\left(\mathrm{d}\alpha\right)\ge\int_{0}^{1}y\,\mu_{\sigma}\left(\mathrm{d}\alpha\right)=y,
\]
from which follows that 
\[
L_{\sigma}^{\prime}\ge L,
\]
that is, $L_{\sigma}^{\prime}$ stochastically dominates $L$ in first
order. The cumulative distribution function of $L_{\sigma}^{\prime}$
has the explicit form 
\[
F_{L_{\sigma}^{\prime}}\left(y\right)=P\left(h_{\sigma}\left(L\right)\le y\right)=P\left(L\le h_{\sigma}^{-1}\left(y\right)\right)=F_{L}\left(h_{\sigma}^{-1}\left(y\right)\right),
\]
and the density is $f_{L_{\sigma}^{\prime}}\left(y\right)=\frac{f_{L}\left(h_{\sigma}^{-1}\left(y\right)\right)}{h_{\sigma}^{\prime}\left(h_{\sigma}^{-1}\left(y\right)\right)}=\frac{f_{L}\left(h_{\sigma}^{-1}\left(y\right)\right)}{\sigma\left(F_{L}\left(h_{\sigma}^{-1}\left(y\right)\right)\right)}$
by use of \eqref{eq:18}. The quantile function 
\begin{equation}
F_{L_{\sigma}^{\prime}}^{-1}=h_{\sigma}\circ F_{L}^{-1}.\label{eq:17}
\end{equation}
 is obtained by inversion. 
\begin{example}
Figure~\ref{fig:1} contains the densities of both distortions, $L_{\sigma}$
and $L_{\sigma}^{\prime}$, for the standard normal distribution.
The distortion function chosen in this example is $\sigma\left(u\right)=0.7+0.9u^{2}$.
This example reveals that the mode, as well as the tails of the random
variables $L_{\sigma}$ and $L_{\sigma}^{\prime}$ differ significantly;
the tails of $L_{\sigma}^{\prime}$ are heavier.
\end{example}

\paragraph{Opposite perspectives.}

The latter formula \eqref{eq:17} reveals that $L_{\sigma}^{\prime}$
has distorted outcomes, distorted by $h_{\sigma}$, but the probabilities
are unchanged. So $L_{\sigma}^{\prime}$ can be considered as alternative
to \eqref{eq:3}, doing exactly the opposite of the formula~\eqref{eq:3}
stated initially: $L_{\sigma}^{\prime}$ has the same probabilities
as $L$, but the outcomes are distorted by $h_{\sigma}$ whereas $L_{\sigma}$
has the same outcomes as $L$, but the probabilities are distorted
by $\tau_{\sigma}$. However, both, $L_{\sigma}$ and $L_{\sigma}^{\prime}$,
have the same expected value 
\[
\mathbb{E}\, L_{\sigma}=\pi_{\sigma}\left(L\right)=\mathbb{E}\, L_{\sigma}^{\prime}.
\]

\paragraph{Explicit distances. }

As the cumulative distribution function is available for $L_{\sigma}$
and $L_{\sigma}^{\prime}$ as elaborated, explicit expressions are
available for selected distances of random variables. An explicit
representation of the Kolmogorov--Smirnov distance for example is
\[
\sup_{y\in\mathbb{R}}\left|F_{L}\left(y\right)-\tau_{\sigma}\left(F_{L}\left(h_{\sigma}\left(y\right)\right)\right)\right|,
\]
and the Wasserstein distance (cf.~\cite{Villani2003}) has the explicit
formula 
\[
\int_{0}^{1}\left|h_{\sigma}\left(F_{L}^{-1}\left(\tau_{\sigma}\left(y\right)\right)\right)-F_{L}^{-1}\left(y\right)\right|\sigma\left(u\right)\mathrm{d}u.
\]

\subsection{Actuarial Applications}

\begin{figure}[t]
\includegraphics[width=\textwidth]{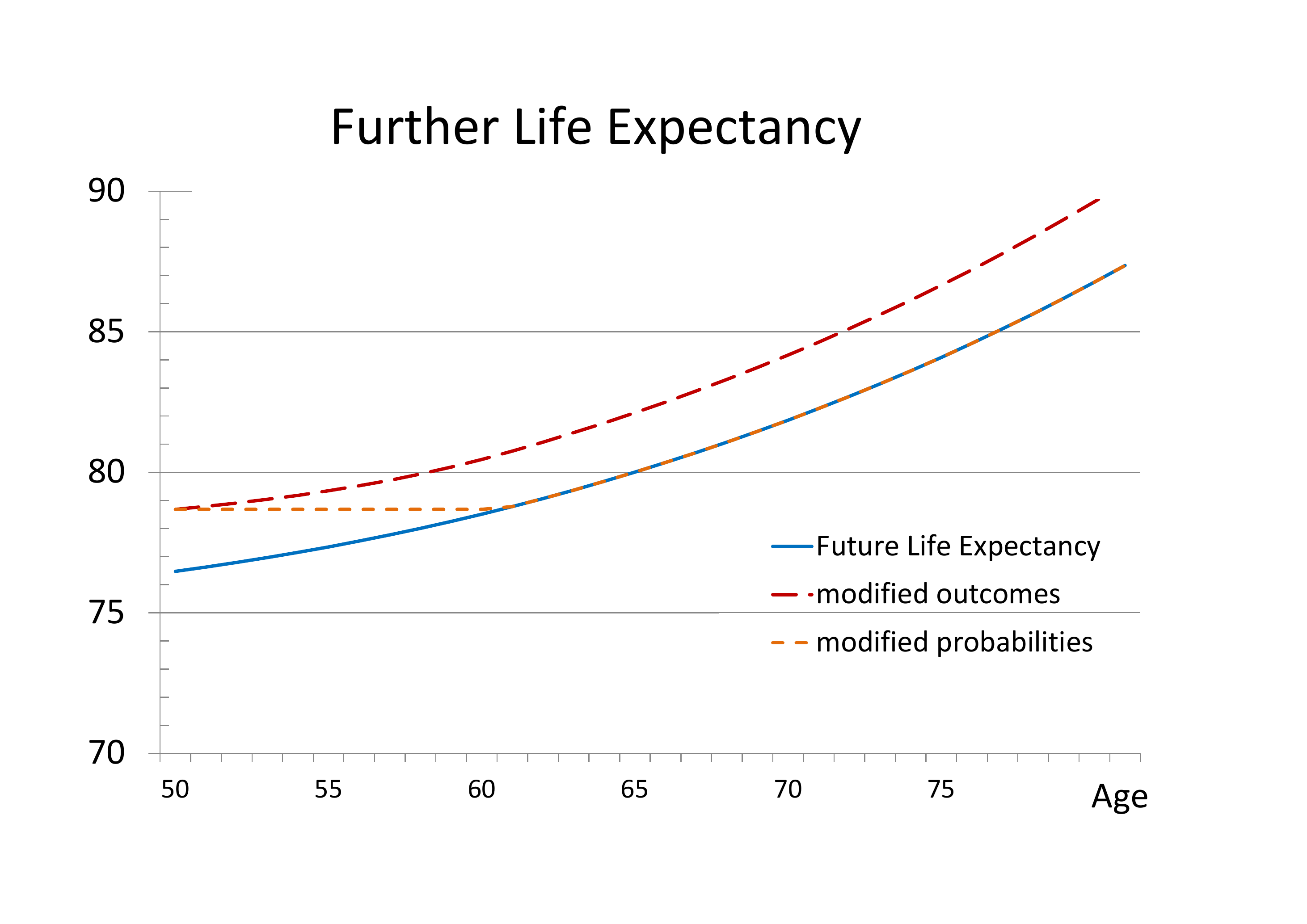}

\caption{\label{fig:Expectancies}Further life expectancies based on distorted
outcomes, and based on distorted probabilities. The distortion employed
is the conditional tail expectation at the level of $10\,\%$, $\CTE_{10\%}$.}
\end{figure}

Actuarial concerns have been addressed on various locations of the
paper, however, we stress again that $\pi$, $\pi_{\sigma}$ and in
particular $\CTE$ constitute premium principles. For a given loss
distribution with monotone (increasing, or decreasing) loss function
$L$ (note, that this is almost always the case in life insurance),
the function $\pi_{\sigma}\left(L\right)=\int_{0}^{1}F_{L}^{-1}\left(u\right)\sigma\left(u\right)\mathrm{d}u$
can be given in a closed form.
\begin{example}
Considering the simple life expectancy,%
\footnote{Note that the life expectancy is an annuity with an interest rate
of $0\,\%$. We have chosen an annuity as a representative example
for a typical life insurance contract. Considering the life expectancy
allows moreover excluding the interest rate in order to simplify the
presented results.%
} i.e. the random variable $L\left(k\right)=k$ (which is strictly
increasing), then $F_{L}^{-1}\left(_{k}q_{x}\right)=k$ and 
\[
\pi_{\sigma}\left(L\right)=\int_{0}^{1}F_{L}^{-1}\left(u\right)\sigma\left(u\right)\mathrm{d}u=\sum_{k=0}k\cdot\int_{_{k}q_{x}}^{_{k+1}q_{x}}\sigma\left(u\right)\mathrm{d}u
\]
is the distorted life expectancy. 
\end{example}

\paragraph{Distorted probabilities.}

Following \eqref{eq:2-2} one may consider $\int_{_{k}q_{x}}^{_{k+1}q_{x}}\sigma\left(u\right)=:{}_{k}\!\tilde{p}_{x}\cdot\tilde{q}_{x+k}$
as probability of a new life table (indicated by the tilde), and use
this new life table to compute premiums, as well as reserves. This
is exemplary depicted in Figure~\ref{fig:Expectancies}. It is visible
in this chart that the modified life table increases the life expectancy
by approximately $2$ years initially, but the increasing effect disappears
at the age representing the quantile (here, at the age of 60 years
for $\alpha=10\,\%$, considering a person with an initial age of
50). For this reason it is appropriate to use $\pi_{\sigma}\left(L\right)$
as a premium, but it is \emph{not }desirable to use the new life table
to compute reserves. The reserves loose the safety loading by employing
the new life table, whenever the age exceeds the quantile.

\paragraph{Distorted outcomes.}

As already outlined it is natural to use the distorted outcomes instead
of distorted probabilities in actuarial practice. As to compute the
premiums the above discussion applies equally well, and an explicit
form is available to compute the premium. For the exposed case of
life expectancy the result is 
\[
\pi_{\sigma}\left(L\right)=\int_{0}^{1}F_{L}^{-1}\left(u\right)\sigma\left(u\right)\mathrm{d}u=\sum_{k=0}h_{\sigma}\left(k\right)\cdot{}_{k}p_{x}\: q_{x+k}.
\]

It is the big advantage of distorted outcomes, that the reserves can
be handled with the same ingredients as the premium, that is with
the same probabilities and the same function $h_{\sigma}$: $L$ simply
needs to be replaced by $L_{\sigma}^{\prime}=h_{\sigma}\left(L\right)$.
It is evident in Figure~\ref{fig:Expectancies} that the safety loading
is preserved over time. 

Distorted premiums, interpreted as distorted outcomes, are thus a
reliable premium principle which provide not only premiums, but also
reserves in a correct and time-consistent way. The distorted premium
principle $\pi_{\sigma}$ to compute the reserves can be applied by
the actuary easily, and along with the related outcomes distorted
by $h_{\sigma}$.

\section{Concluding Remarks}

This article outlines new descriptions of distorted premium principles.
Distorted premium principles constitute a basic class of premium principles,
as every premium satisfying sufficiently strong axioms can be built
by involving just elementary distortions. 

The first representation derived is described as a supremum, based
on conjugate duality. The convex conjugate function is formulated
in terms of second order stochastic dominance constrains.

The other representation, which is a further central result of this
article, is described as an infimum and can be considered as the opposite
formulation. This alternative description makes distorted premiums
eligible for successful use in actuarial applications, as the reserve
process is easily available for concrete insurance contracts and,
above all, the process of reserves is consistent over time. The results
thus make distorted premiums eligible for extended actuarial use.

\section{Acknowledgment}

The charts have been programmed by use of Mathematica and Excel.

\bibliographystyle{alpha}
\bibliography{../../Literatur/LiteraturAlois}

\begin{thebibliography}{GLWT12}

\bibitem[Ace02]{Acerbi2002a}
Carlo Acerbi.
\newblock Spectral measures of risk: A coherent representation of subjective
  risk aversion.
\newblock {\em Journal of Banking \& Finance}, 26:1505--1518, 2002.

\bibitem[ADEH99]{Artzner1999}
Philippe Artzner, Freddy Delbaen, Jean-Marc Eber, and David Heath.
\newblock Coherent {M}easures of {R}isk.
\newblock {\em Mathematical Finance}, 9:203--228, 1999.

\bibitem[ADH97]{Artzner1997}
Philippe Artzner, Freddy Delbaen, and David Heath.
\newblock Thinking coherently.
\newblock {\em Risk}, 10:68--71, November 1997.

\bibitem[AS02]{Acerbi2002}
Carlo Acerbi and Prospero Simonetti.
\newblock Portfolio optimization with spectral measures of risk.
\newblock {\em EconPapers}, 2002.

\bibitem[BBH09]{Balbas2009a}
Alejandro Balbás, Beatriz Balbás, and Antonio Heras.
\newblock Optimal reinsurance with general risk measures.
\newblock {\em Insurance: Mathematics and Economics}, 44:374--384, 2009.

\bibitem[CT11]{Chi2011}
Yichun Chi and Ken~Seng Tan.
\newblock Optimal reinsurance under {V}a{R} and {CV}a{R} risk measures.
\newblock {\em ASTIN Bulletin}, 41(2):487--509, 2011.

\bibitem[Dan05]{Dana2005}
Rose-Anne Dana.
\newblock A representation result for concave {S}chur concave functions.
\newblock {\em Mathematical Finance}, 15:613--634, 2005.

\bibitem[Den90]{Denneberg1989}
Dieter Denneberg.
\newblock Distorted probabilities and insurance premiums.
\newblock {\em Methods of Operations Research}, 63:21--42, 1990.

\bibitem[FS04]{Follmer2004}
Hans F\"ollmer and Alexander Schied.
\newblock {\em Stochastic Finance: An Introduction in Discrete Time}.
\newblock de Gruyter Studies in Mathematics 27. de Gruyter, 2004.

\bibitem[FZ08]{Furman2008}
Edward Furman and Ri\v{c}ardas Zitikis.
\newblock Weighted premium calculation principles.
\newblock {\em Insurance: Mathematics and Economics}, 42:459--465, 2008.

\bibitem[Gio05]{Giorgi2005}
Enrico~De Giorgi.
\newblock Reward-risk portfolio selection and stochastic dominance.
\newblock {\em Journal of Banking \& Finance}, 29:895--926, 2005.

\bibitem[GLWT12]{Goovaerts2012}
Marc Goovaerts, Dani\"{e}l Linders, Koen~Van Weert, and Fatih Tank.
\newblock On the interplay between distortion, mean value and the
  {H}aezendonck-{G}oovaerts risk measures.
\newblock {\em Insurance: Mathematics and Economics}, 51:10--18, 2012.

\bibitem[HBV12]{Balbas2012}
Antonio Heras, Beatriz Balbás, and José~Luis Vilar.
\newblock Conditional tail expectation and premium calculation.
\newblock {\em ASTIN Bulletin}, 42:325--342, 2012.

\bibitem[Hoe40]{Hoeffding}
Wassilij Hoeffding.
\newblock Maßstabinvariante {K}orrelationstheorie.
\newblock {\em Schriften Math. Inst. Univ. Berlin}, 5:181--233, 1940.
\newblock In German.

\bibitem[JST06]{Schachermayer2006}
Elyès Jouini, Walter Schachermayer, and Nizar Touzi.
\newblock Law invariant risk measures have the {F}atou property.
\newblock {\em Advances in Mathematical Economics}, 9:49--71, 2006.

\bibitem[Kro07]{Krokhmal2007}
Pavlo~A. Krokhmal.
\newblock Higher moment coherent risk measures.
\newblock {\em Quantitative Finance}, 7(4):373--387, 2007.

\bibitem[KRS09]{Bangwon2009}
Bangwon Ko, Ralph~P. Russo, and Nariankadu~D. Shyamalkumar.
\newblock A note on the nonparametric estimation of the {CTE}.
\newblock {\em ASTIN Bulletin}, 39(2):717--734, 2009.

\bibitem[Kus01]{Kusuoka}
Shigeo Kusuoka.
\newblock On law invariant coherent risk measures.
\newblock {\em Advances in Mathematical Economics}, 3:83--95, 2001.

\bibitem[MS02]{StoyanMueller2002}
Alfred M\"uller and Dietrich Stoyan.
\newblock {\em Comparison methods for stochastic models and risks}.
\newblock Wiley series in probability and statistics. Wiley, Chichester, 2002.

\bibitem[Pfl00]{Pflug2000}
Georg~{\relax Ch}. Pflug.
\newblock Some remarks on the value-at-risk and the conditional value-at-risk.
\newblock In S.~Uryasev, editor, {\em Probabilistic Constrained Optimization:
  Methodology and Applications}, pages 272--281. Kluwer Academic Publishers,
  Dordrecht, 2000.

\bibitem[Pfl06]{Pflug2006}
Georg~{\relax Ch}. Pflug.
\newblock On distortion functionals.
\newblock {\em Statistics and Risk Modeling (formerly: Statistics and
  Decisions)}, 24:45--60, 2006.

\bibitem[PR07]{PflugRomisch2007}
Georg~{\relax Ch}. Pflug and Werner R\"omisch.
\newblock {\em Modeling, Measuring and Managing Risk}.
\newblock World Scientific, River Edge, NJ, 2007.

\bibitem[RU00]{RockafellarUryasev2000}
R.~Tyrrell Rockafellar and Stanislav Uryasev.
\newblock Optimization of {C}onditional {V}alue-at-{R}isk.
\newblock {\em Journal of Risk}, 2:21--41, 2000.

\bibitem[RU02]{Rockafellar}
R.~Tyrrell Rockafellar and Stanislav Uryasev.
\newblock Conditional value-at-risk for general loss distributions.
\newblock {\em Journal of Banking and Finance}, 26:1443--1471, 2002.

\bibitem[SDR09]{RuszczynskiShapiro2009}
Alexander Shapiro, Darinka Dentcheva, and Andrzej Ruszczy\'nski.
\newblock {\em Lectures on {S}tochastic {P}rogramming}.
\newblock MQS-SIAM Series on Optimization 9, 2009.

\bibitem[Sha12]{Shapiro2011}
Alexander Shapiro.
\newblock On {K}usuoka representations of law invariant risk measures.
\newblock {\em Mathematics of Operations Research}, November 2012.

\bibitem[SS07]{shanked}
Moshe Shaked and J.~George Shanthikumar.
\newblock {\em Stochastic Order}.
\newblock Springer Series in Statistics. Springer, 2007.

\bibitem[vdV98]{vdVaart}
Aad~W. van~der Vaart.
\newblock {\em Asymptotic Statistics}.
\newblock Cambridge University Press, 1998.

\bibitem[Vil03]{Villani2003}
Cédric Villani.
\newblock {\em Topics in {O}ptimal {T}ransportation}, volume~58 of {\em
  Graduate Studies in Mathematics}.
\newblock American Mathematical Society, Providence, RI, 2003.

\bibitem[VX11]{Valdez2011}
Emiliano~A. Valdez and Yugu Xiao.
\newblock On the distortion of a copula and its margins.
\newblock {\em Scandinavian Actuarial Journal}, 4:292--317, 2011.

\bibitem[Wan00]{Wang2000}
Shaun~S. Wang.
\newblock A class of distortion operatiors for financial and insurance risk.
\newblock {\em The Journal of Risk and Insurance}, 67(1):15--36, 2000.

\bibitem[WD98]{Wang1998a}
Shaun Wang and Jan Dhaene.
\newblock Comonotonicity, correlation order and premium principles.
\newblock {\em Insurance: Mathematics and Economics}, 22:235--242, 1998.

\bibitem[WY98]{Wang1998}
Shaun~S. Wang and Virginia~R. Young.
\newblock Risk-adjusted credibility premiums using distorted probabilities.
\newblock {\em Scandinavian Actuarial Journal}, 2:143--165, 1998.

\bibitem[WYP97]{Wang1997}
Shaun~S. Wang, Virginia~R. Young, and Harry~H. Panjer.
\newblock Axiomatic characterization of insurance prices.
\newblock {\em Insurance: Mathematics and Economics}, 21:173--183, 1997.

\bibitem[You06]{Young}
Virginia~R. Young.
\newblock {\em Premium Principles}.
\newblock John Wiley \& Sons, Ltd, 2006.

\end{thebibliography}

\appendix

\section*{Appendix}

For reference and the sake of completeness we list the following elementary
result for affine linear transformations of the convex conjugate function.
\begin{lem}
\label{lem:Transform}The convex conjugate of the function $g\left(x\right):=\alpha+\beta x+\gamma\cdot f\left(\lambda x+c\right)$
for $\gamma>0$ and $\lambda\neq0$ is 
\[
g^{*}\left(y\right)=-\alpha-c\,\frac{y-\beta}{\lambda}+\gamma\cdot f^{*}\left(\frac{y-\beta}{\lambda\gamma}\right).
\]
\end{lem}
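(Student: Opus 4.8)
The plan is to compute $g^*$ directly from the definition $g^*(y) = \sup_x \, x\,y - g(x)$ and to collapse the resulting supremum onto the one defining $f^*$ by an affine change of variables. First I would insert the explicit form of $g$, obtaining
\[
g^*(y) = \sup_x \left[ x\,y - \alpha - \beta x - \gamma\, f(\lambda x + c)\right],
\]
pull the constant $-\alpha$ out of the supremum at once, and group the linear terms as $x(y-\beta)$, which signals that $y$ and $\beta$ should travel together throughout the computation.

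The core step is the substitution $u := \lambda x + c$. Since $\lambda \neq 0$ this is a bijection of $\mathbb{R}$ onto itself, so the supremum over $x$ becomes a supremum over $u$ with $x = (u-c)/\lambda$; this is where the hypothesis $\lambda \neq 0$ is used. After splitting off the $u$-independent contribution $-c(y-\beta)/\lambda$, the bracket reduces to $u\,\tfrac{y-\beta}{\lambda} - \gamma f(u)$. Factoring the positive constant $\gamma$ out of the supremum --- which requires $\gamma > 0$, so that $\sup_u \gamma(\cdots) = \gamma\sup_u(\cdots)$ rather than an infimum --- leaves exactly $\gamma\,\sup_u \big[u\cdot \tfrac{y-\beta}{\lambda\gamma} - f(u)\big] = \gamma\, f^*\!\left(\tfrac{y-\beta}{\lambda\gamma}\right)$, and reassembling the three pieces gives the asserted identity.

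There is no genuine obstacle here: the statement is a bookkeeping identity for affine reparametrisations of the Legendre--Fenchel transform, and the entire content lies in tracking the substitution cleanly. The only points deserving care are the two nondegeneracy hypotheses, namely $\lambda \neq 0$ (needed so the change of variable is onto all of $\mathbb{R}$, and so that division by $\lambda$ is legitimate) and $\gamma > 0$ (needed so the scalar factors out of the supremum without flipping it into an infimum); each is used precisely once, at the two steps indicated above. I would state the result for arbitrary $f$, since neither convexity nor lower semicontinuity of $f$ is required for the identity to hold.
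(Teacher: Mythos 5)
Your proposal is correct and follows exactly the paper's own argument: a direct computation from the definition of $g^{*}$, the affine change of variable $x\mapsto\frac{x-c}{\lambda}$ (your $u=\lambda x+c$), and factoring the positive constant $\gamma$ out of the supremum. Your explicit remarks on where $\lambda\neq0$ and $\gamma>0$ are used are accurate but do not change the route, which coincides with the paper's.
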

\begin{proof}
Just observe that 
\begin{align}
g^{*}\left(y\right)= & \sup_{x}\, yx-g(x)\nonumber \\
= & \sup_{x}\, yx-\alpha-\beta x-\gamma\cdot f\left(\lambda x+c\right)\nonumber \\
= & \sup_{x}\, y\frac{x-c}{\lambda}-\alpha-\beta\frac{x-c}{\lambda}-\gamma\cdot f\left(x\right)\label{eq:drei}\\
= & -\alpha-c\frac{y-\beta}{\lambda}+\sup_{x}\, x\frac{y-\beta}{\lambda}-\gamma\cdot f\left(x\right)\nonumber \\
= & -\alpha-c\frac{y-\beta}{\lambda}+\gamma\cdot\sup_{x}\, x\frac{y-\beta}{\lambda\gamma}-f\left(x\right)\nonumber \\
= & -\alpha-c\frac{y-\beta}{\lambda}+\gamma\cdot f^{*}\left(\frac{y-\beta}{\lambda\gamma}\right),\nonumber 
\end{align}
where we have replaced $x$ by $\frac{x-c}{\lambda}$ in \eqref{eq:drei}.\end{proof}

\end{document}